\pgfplotsset{compat=newest}
\theoremstyle{plain}
\newtheorem{theorem}{Theorem}[section]
\newtheorem{lemma}[theorem]{Lemma}
\newtheorem{proposition}[theorem]{Proposition}
\theoremstyle{definition}
\newtheorem{remark}[theorem]{Remark}
\newtheorem*{thm}{Theorem}
\newcommand\blfootnote[1]{%
	\begingroup
	\renewcommand\thefootnote{}\footnote{#1}%
	\addtocounter{footnote}{-1}%
	\endgroup
}
\newcommand*{\cA}{\mathcal{A}}
\newcommand*{\cB}{\mathcal{B}}
\newcommand*{\cD}{\mathcal{D}}
\newcommand*{\cE}{\mathcal{E}}
\newcommand*{\cH}{\mathcal{H}}
\newcommand*{\cI}{\mathcal{I}}
\newcommand*{\cN}{\mathcal{N}}
\newcommand*{\cJ}{\mathcal{J}}
\newcommand*{\cS}{\mathcal{S}}
\newcommand*{\cW}{\mathcal{W}}
\newcommand*{\SDP}{\mathrm{SDP}}
\newcommand*{\CC}{\mathbb{C}}
\newcommand*{\NN}{\mathbb{N}}
\newcommand*{\GL}{\mathrm{GL}}
\newcommand*{\End}{\mathrm{End}}
\newcommand*{\Par}{\mathrm{Par}}
\newcommand*{\id}{I}
\newcommand*{\poly}{\mathrm{poly}}
\newcommand*{\tr}{\mathrm{tr}\,}
\newcommand*{\ket}[1]{| #1 \rangle}
\newcommand*{\bra}[1]{\langle #1 |}
\newcommand{\ketbra}[2]{|#1\rangle\!\langle #2|}
\newcommand*{\bigO}{\mathrm{O}}
\newcommand*{\Pos}{\mathscr{P}}
\newcommand*{\Lin}{\mathscr{L}}
\newcommand*{\CP}{\mathrm{CP}}
\newcommand*{\mS}{\mathfrak{S}}
\newcommand*{\<}{\langle}
\renewcommand*{\>}{\rangle}
\newcommand {\br} [1] {\ensuremath{ \left( #1 \right) }}
\newcommand {\cbr} [1] {\ensuremath{ \left\lbrace #1 \right\rbrace }}
\newcommand\reallywidehat[1]{%
	\savestack{\tmpbox}{\stretchto{%
			\scaleto{%
				\scalerel*[\widthof{\ensuremath{#1}}]{\kern-.6pt\bigwedge\kern-.6pt}%
				{\rule[-\textheight/2]{1ex}{\textheight}}
			}{\textheight}%
		}{0.5ex}}%
	\stackon[1pt]{#1}{\tmpbox}%
}
\newcommand{\suppress}[1]{}
\begin{document}

\title{{\LARGE An efficient parameterized algorithm for computing quantum channel fidelity via symmetries exploitation}}


\author[1]{Yeow Meng Chee}
\author[2]{Hoang Ta}
\author[3]{Van Khu Vu}

\affil[1]{\small{Singapore University of Technology and Design, Singapore}}
\affil[2]{\small{Hanoi University of Science and Technology, Vietnam}}
\affil[3]{\small{VinUniversity, Vietnam}}

\date{}

\maketitle


\begin{abstract}
	Determining the optimal fidelity for the transmission of quantum information over noisy quantum channels is one of the central problems in quantum information theory. Recently, [Berta-Borderi-Fawzi-Scholz, Mathematical Programming, 2021] introduced an asymptotically converging semidefinite programming hierarchy of outer bounds for this quantity. However, the size of the semidefinite programs (SDPs) grows exponentially with respect to the level of the hierarchy, thus making their computation unscalable. In this work, by exploiting the symmetries in the SDP, we show that, for a fixed output dimension of the quantum channel, we can compute the SDP in time polynomial with respect to the level of the hierarchy and input dimension. As a direct consequence of our result, the optimal fidelity can be approximated with an accuracy of $\epsilon$ in $\poly(1/\epsilon, \text{input dimension})$ time, compared to the $\exp(1/\epsilon, \text{input dimension})$ running time required for direct computation.
\end{abstract}
\maketitle
\blfootnote{Part of this work was completed while all three authors were with the Department of Industrial Systems Engineering and Management, National University of Singapore, Singapore.}

\section{Introduction}
\label{sec:introduction}

One of the central problems in information theory is determining the minimum error probability that arises during data transmission through a noisy channel or when storing it on an unreliable storage medium. This task can be formulated as an optimization problem that maximizes the probability of determining the sent messages exactly over all valid encoders and decoders. This problem was initially investigated by Shannon~\cite{shannon1948mathematical}. There, the author showed that when considering large numbers of independent copies of the same channel, this quantity can be characterized by a simple expression known as the channel capacity of the channel.

\par 
From an algorithmic perspective, this problem was studied in~\cite{barman2017algorithmic}, which focused on investigating the problem of determining the optimal encoder and decoder that maximize the success probability over a noisy channel in the non-asymptotic regime. In this setting, given $N_{X \to Y}$ that is a noisy channel from $X$ to $Y$ and $M \in \NN$, let $\mathrm{p}(N,M)$ be the maximum success probability for transmitting a uniform $M$-dimensional message through the channel $N_{X\to Y}$. The study in~\cite{barman2017algorithmic} showed that the computation of $\mathrm{p}(N,M)$ can be formulated as an optimization problem involving a submodular function. Building upon this insight, a simple and efficient greedy algorithm was introduced in~\cite{barman2017algorithmic} to find a code that achieves a $(1-e^{-1})$-approximation of $\mathrm{p}(N,M)$. Furthermore, the study in~\cite{barman2017algorithmic} provided an efficiently computable linear programming relaxation denoted by $\mathrm{LP}(N,M)$ (also referred to as meta converse~\cite{hayashi2009information,polyanskiy2010channel}). This relaxation provides upper bounds on $\mathrm{p}(N,M)$, and it was proven that $\mathrm{p}(N,M) \leq \mathrm{LP}(N,M) \leq (1-e^{-1})^{-1} \mathrm{p}(N,M)$. However, it was highlighted in~\cite{barman2017algorithmic} that the problem of obtaining an approximate solution for $\mathrm{p}(N,M)$ with a better constant factor than $1-e^{-1}$ is classified as $\mathrm{NP}$-Hard. Further research on this problem on variant models has been conducted in~\cite{berta2016quantum,fawzi2019approximation,barman2020tight,fawzi2023broadcast}.

\par

The problem in the analog quantum setting involves determining the quantum channel fidelity (or short channel fidelity) $\mathrm{F}(\cN,M)$ for transmitting one part of a maximally entangled state with dimension $M$ through a noisy quantum channel $\cN_{A \to B}$ from $A$ to $B$, where $A$ and $B$ are finite dimensional
complex Hilbert spaces. Similar to the classical case, this problem can be formulated as a bilinear optimization problem with matrix-valued variables. In order to approximate $\mathrm{F}(\cN,M)$, an efficiently computable semidefinite programming relaxation was proposed in~\cite{leung2015power}. However, unlike the classical case, the gap between this relaxation and $\mathrm{F}(\cN,M)$ is not well understood in the quantum setting. In other directions, numerical methods based on iterative seesaw techniques, such as those proposed in~\cite{reimpell2005iterative,fletcher2007optimum,taghavi2010channel,johnson2017qvector}, have been developed to provide lower bounds for $\mathrm{F}(\cN,M)$. These methods are computationally tractable semidefinite programs that often converge in practice. In different settings, several works in~\cite{tomamichel2016quantum,wang2016semidefinite,wang2018semidefinite,kaur2019extendibility} have focused on the problem of determining the size of a maximally entangled state that can be transmitted through a noisy quantum channel while maintaining a fixed fidelity of $1 - \epsilon$.   
\par 

In a recent work~\cite{berta2021semidefinite}, the authors introduced an asymptotically converging semidefinite programming hierarchy $\{\SDP_{n}(\cN, M)\}_{n \in \NN}$ of outer bounds for $\mathrm{F}(\cN, M)$ (defined in Section~\ref{subsec:fidelity}). The hierarchy provides a measure of convergence speed and approaches the channel fidelity $\mathrm{F}(\cN, M)$ as the parameter $n$ tends to infinity. In particular, the authors showed that the inequality $0 \leq \mathrm{OPT}(\SDP_{n}(\cN, M)) - \mathrm{F}(\cN, M) \leq \frac{\poly(d)}{\sqrt{n}}$ holds, where $\mathrm{OPT}(\SDP_{n}(\cN, M))$ is the optimal value of the program $\SDP_{n}(\cN, M)$, and $d$ depends only on the input dimension, output dimension, $M$. This result implies that for any $\epsilon > 0$, estimating $\mathrm{F}(\cN, M)$ with an additive error of $\epsilon$ can be achieved by solving the semidefinite program $\SDP_{n}(\cN, M)$, where $n = \frac{\poly(d)}{\epsilon^2}$. However, it is important to note that the size of the matrix variables in $\SDP_{n}(\cN, M)$ grows exponentially with respect to $n$. Therefore, direct computation of the optimal value of the program $\SDP_{n}(\cN, M)$ is inefficient.

\paragraph*{Main result and techniques}
In this paper, we aim to propose an efficient way to compute the optimal value of the program $\SDP_{n}(\cN,M)$, and therefore, have an efficient approximate algorithm for computing the channel fidelity $\mathrm{F}(\cN,M)$, where $\cN$ is a quantum channel, and $M,n \in \mathbb{N}$. In particular, our main result is stated as follows.
\begin{thm}
For $M, n \in \NN_{\geq 1}$, let $\cN_{\bar{A}\to B}$ be a quantum channel, and let $d_{\bar{A}}$ denote the input dimension of $\cN_{\bar{A}\to B}$. The optimal value of the program $\SDP_{n}(\cN,M)$ can be computed in $\poly(n,d_{\bar{A}})$ time, for a fixed output dimension of the quantum channel $\cN_{\bar{A}\to B}$. As a direct consequence, the channel fidelity $\mathrm{F}(\cN,M)$ can be estimated with an additive error of $\epsilon$ in $\poly(1/\epsilon,d_{\bar{A}})$ time. 
\end{thm}

To achieve this goal, we further extend the technique developed in~\cite{litjens2017semidefinite, Polak_thesis} to exploit the symmetries of the semidefinite program $\SDP_{n}(\cN,M)$, which allows us to obtain an equivalent representation and solve it efficiently. Specifically, in Lemma~\ref{lem:invariant_matrices}, we demonstrate that the search space of the program can be restricted to an invariant subspace  $\End^{\mS_n}(\cH^n)$, which is obtained by taking a natural action of the symmetric group $\mS_n$ on the $\cH^n$ - the original search space. Utilizing tools from representation theory, particularly the representation theory of the symmetric group, we can construct a bijective linear map $\psi$ from $\End^{\mS_n}(\cH^n) $ to $\bigoplus_{i=1}^{t}\CC^{m_i \times m_i}$ for some integers $t$ and $m_i, i = 1,\dots,t$. This map ensures that for any $X \in \End^{\mS_n}(\cH^n)$, the matrix $\psi(X)$ takes on a block diagonal form with $\poly(n)$ number of blocks, i.e., $t = \poly(n)$, and each of which has a size bounded by a polynomial in $d_{\bar{A}}$ and $n$, i.e., $m_i = \poly(d_{\bar{A}},n)$ for all $i \in \{1,\dots,t \}$. 
\par
In addition, the bijective map $\psi$ preserves positive semidefiniteness, i.e., for any $X \in \End^{\mS_n}(\cH^n)$, $X \succeq 0$ if and only if $\psi(X) \succeq 0$. Therefore, for any $X \in \End^{\mS_n}(\cH^n)$, checking whether $X$ is a positive semidefinite matrix can be reduced to checking if the smaller matrices $[\psi(X)]_1,\dots,[\psi(X)]_t$ are all positive semidefinite, where $[\psi(X)]_i$ is the $i$-th block of $\psi(X)$ for $i = 1,\dots,t$.  
\begin{figure}[h!]
\begin{center}
    \includegraphics[width=8cm]{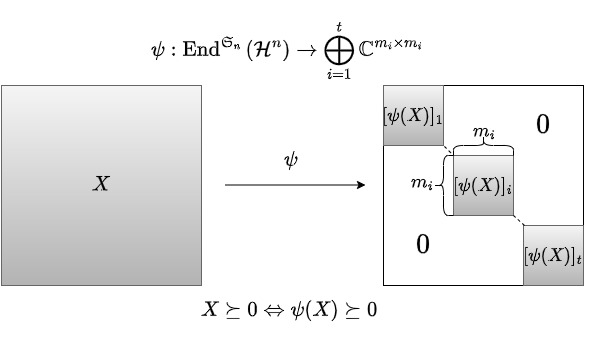}
\end{center}
\end{figure}

From the construction of $\psi$, we can build a transformation $\Phi$ that converts the program $\SDP_{n}(\cN, M)$ into an equivalent semidefinite program, denoted as $\Phi(\SDP_{n}(\cN, M))$, which has $\poly(d_{\bar{A}}, n)$ number of constraints, and each matrix variable has a polynomial size in $d_{\bar{A}}$ and $n$ (see Theorem~\ref{thm:convert_program}). Additionally, by utilizing the representation of the symmetric group and exploiting specific properties of the constraints involved in the program $\SDP_{n}(\cN, M)$, we show in Theorem~\ref{thm:transformation} that the transformation can be implemented in $\poly(d_{\bar{A}},n)$ time, while a direct implementation of the transformation would require exponential computation in $n$. Therefore, we can determine the optimal value of the program $\SDP_{n}(\cN, M)$ by solving the program $\Phi(\SDP_{n}(\cN, M))$, and this can be done in $\poly(d_{\bar{A}},n)$ time. The Table~\ref{table:reduce_dimension} shows the reduction in the size of the matrix variables in the program $\Phi(\SDP_{n}(\cN,2))$ compared to the program $\SDP_{n}(\cN,2)$ (for certain values of $n$) for arbitrary qubit quantum channels $\cN$.

\begin{table}[ht]
  	\begin{center}
  			\begin{tabular}{ |c|c|c|c| } 
  			\hline
  			$n$ & Matrix Size in $\SDP_{n}(\cN,2)$ & Maximum Matrix Size  in $\Phi(\SDP_{n}(\cN,2))$ & Number of Blocks \\
  			\hline 
  			$2$ &$64\times 64$& $40 \times 40$ & 2\\
  			\hline 
  			$3$ &$256\times 256$ &  $80 \times 80$& 3 \\ 
  			\hline
  			$4$ & $1024\times 1024$  & $180 \times 180$ & 5 \\
  			\hline
                $5$ & $4096\times 4096$  & $336 \times 336$ & 6 \\
  			\hline
                $6$ & $16384\times 16384$  & $560 \times 560$ & 9 \\
  			\hline
                $7$ & $65536\times 65536$  & $896 \times 896$ & 11 \\
  			\hline
                $8$ & $262144\times 262144$  & $1440 \times 1440$ & 15 \\
  			\hline
                $9$ & $1048576\times 1048576$  & $2160 \times 2160$ & 18 \\
  			\hline
                $10$ & $4194304\times 4194304$  & $3080 \times 3080$ & 23 \\
  			\hline
  		\end{tabular}
  		\caption{\label{table:reduce_dimension} Comparing the sizes of two semidefinite programs: $\SDP_{n}(\cN,2)$ vs. $\Phi(\SDP_{n}(\cN,2))$ for arbitrary qubit quantum channels $\cN$ }
  	\end{center}
\end{table}
In the case where the input dimension $d_{\bar{A}}$ of the channel $\mathcal{N}_{\bar{A} \to B}$ is fixed, we can consider another semidefinite programming hierarchy, which also provides an asymptotically converging sequence of outer bounds for $\mathrm{F}(\mathcal{N}, M)$ (see Section~\ref{subsec:fidelity} for more details). By applying the same approach developed in this work, we can show that the channel fidelity $\mathrm{F}(\mathcal{N}, M)$ can be estimated with an additive error of $\epsilon$ in $\text{poly}(1/\epsilon, d_B)$ time, where $d_B$ is the output dimension of the quantum channel $\mathcal{N}_{\bar{A} \to B}$.

\paragraph*{Related work}

Algorithmic aspects of optimal channel coding have been extensively studied in various settings. In classical channels, the problem was explored in~\cite{barman2017algorithmic} for point-to-point channels, and a generalization was investigated in~\cite{barman2020tight}. For entanglement-assisted classical channels, the research can be found in~\cite{berta2016quantum}, and for the multiple-access channel, the study is investigated in~\cite{fawzi2023multiple}, and for broadcast channels, it is discussed in~\cite{fawzi2023broadcast}. In terms of quantum channels, the problem has been examined in~\cite{leung2015power, berta2021semidefinite,kaur2021resource,Holdsworth_23} for general quantum channels and in~\cite{fawzi2019approximation} for classical-quantum channels. In addition, in recent work~\cite{Holdsworth_23}, authors exploited the unitary covariance symmetry of the identity channel to reduce the complexity of the $\SDP_n(\cN,M)$ when $n=2$ compared with the original form in~\cite{berta2021semidefinite}.
\par

  Exploiting symmetries to simplify convex programs, especially semidefinite programs, has been done for various problems and applications~\cite[Chapter 9]{Bachoc2012}. This approach has been seen in other surveys as well~\cite{vallentin2009symmetry,de2010exploiting}. Specifically, it has been applied to the mutually unbiased bases problem~\cite{gribling2021mutually} and the classical (quantum) capacity of quantum channels in quantum information theory~\cite{fawzi2022hierarchy}, quantum separability problem~\cite{doherty2004complete}. In coding theory, this method has been used to provide upper bounds for nonbinary codes~\cite{schrijver2005new,gijswijt2006new,laurent2007strengthened,litjens2017semidefinite,polak2019semidefinite}. Furthermore, this approach has also proven beneficial in various other fields and problems, such as queueing theory~\cite{brosch2021optimizing,polak2022symmetry}, crossing numbers for graphs~\cite{de2007reduction,brosch2022new}, truss topology optimization~\cite{bai2009exploiting}, polynomial optimization~\cite{gatermann2004symmetry,riener2013exploiting,raymond2018symmetric}, quantum channel discrimination~\cite{bergh2024parallelization}, compatibility of quantum marginals~\cite{huber2022refuting}, and quantum hypothesis testing~\cite{cheng2024sample}. Our work builds on the technical framework for exploiting symmetries under the symmetric group action, as developed in~\cite{litjens2017semidefinite,Polak_thesis}. This framework has been successfully extended to fundamental problems in quantum information theory~\cite{fawzi2022hierarchy,gribling2021mutually}. We further extend these methods to the problem of approximating quantum channel fidelity, yielding an efficient algorithm.


\paragraph*{Organization } 
The rest of this paper is organized as follows. Section~\ref{sec:preliminaries} provides an introduction to some fundamental notation and a brief overview of the problem of approximate quantum error correction, and presents tools for exploiting symmetries in semidefinite programs. Finally, our proof for the main result is presented in Section~\ref{sec:exloiting_symmetries}.
\section{Preliminaries}
\label{sec:preliminaries}
\subsection*{Basic notation}  
Let $\cH$ be a finite dimensional complex Hilbert space; we denote by $\Lin(\cH)$ the set of linear operators on $\cH$, $\Pos(\cH)$ denotes the set of positive semidefinite operators on $\cH$, and $\cS(\cH):= \{\rho \in \Pos(\cH): \tr(\rho) = 1 \}$ is the set of density operators on $\cH$. For any two Hermitian operators $\rho, \sigma \in \Lin(\cH)$, we write $\sigma \succeq \rho $ if $\sigma - \rho \in \Pos(\cH)$. Let $A,B$ be finite dimensional complex Hilbert spaces, we denote $d_A,d_B$ as the dimension of $A$ and $B$, respectively. For $X \in \Pos(A\otimes B)$, we often explicitly indicate the quantum systems as a subscript by writing $X_{AB}$. The marginal on the subsystem $A$ is denoted $X_A =\tr_{B}(X_{AB}) \coloneqq \sum_{i}(\id_{A} \otimes \bra{i}_B)X_{AB}(\id_{A} \otimes \ket{i}_B)$, where $\{\ket{i}_B\}_i$ is an orthogonal basis of $B$ and $\id_A$ denotes the identity map on $\Lin(A)$. Let $\cbr{\ket{i}}_i$ and $\cbr{\ket{j}}_j$ be the standard bases for $A$ and $B$, respectively. We will use a correspondence between the linear operator in $\Lin(B,A)$ and vectors in $A\otimes B$, given by the linear map $\mathrm{vec}:\Lin(B,A)\rightarrow A\otimes B$, defined as $\mathrm{vec}\br{\ketbra{i}{j}}=\ket{i}\ket{j}$. In the case of $n$ copies of the same Hilbert space $\cH$, we denote by $\cH^{\otimes n} = \cH \otimes \cH \otimes \dots \otimes \cH$, we also use the notation $\cH_{1}^{n}$ or $\cH^n$ to indicate $\cH_1 \otimes \dots \otimes \cH_n$, where $\cH_i \cong \cH$ for all $i \in [n]$. For $i<j$, we denote by $\cH_{i}^{j} \coloneqq \cH_i \otimes \dots \otimes \cH_j$.  
\par
We denote by $\CP(A:B)$ the set of completely positive (CP) maps from $\Lin(A)$ to $\Lin(B)$. A \emph{quantum channel} $\cN_{A \to B}$ is a CP and trace-preserving linear map from $\Lin(A)$ to $\Lin(B)$. Let $A'$ be isomorphic to $A$ and $\ket{\Phi}_{AA'} = \frac{1}{\sqrt{d_A}}\sum_{i}\ket{i}_{A} \ket{i}_{A'}$ be the maximally entangled state. For a linear map $\cN_{A' \to B}$, we denote by $J_{AB}^{\cN} \in \Pos(A \otimes B)$ the corresponding \emph{Choi matrix} defined as $J_{AB}^{\cN} = (\id_A \otimes \cN)({\ketbra{\Phi}{\Phi}}_{AA'})$.
\par     
 For $n \in \mathbb{N}$, we use the notation $\mathfrak{S}_n$ to denote the symmetric group on $n$ symbols. This group consists of the permutations that can be performed on the $n$ symbols, and its group operation involves the composition of permutations. For every $\pi \in \mS_n$, we define the action of $\pi$ on $n$ copies of a finite dimensional Hilbert space $\cH^{\otimes n}$ as
\begin{align*}
	\pi \cdot (h_1 \otimes \dots \otimes h_n) = h_{\pi^{-1}(1)}\otimes \dots \otimes h_{\pi^{-1}(n)} \, , h_i \in \cH \, , \forall \pi \in \mathfrak{S}_n \, .
\end{align*}

For every $\pi \in \mS_n$, we denote by $U_{\cH^n}(\pi)$  a permutation matrix which corresponds to the action of $\pi$ on $\cH^{\otimes n}$. For any $\rho_{\cH^n} \in \Lin \left(\cH^{\otimes n} \right)$, we write
\begin{align*}
	U_{\cH^n}(\pi) \left( \rho_{\cH^n} \right) \coloneqq U_{\cH^n}(\pi) \rho_{\cH^n} U_{\cH^n}(\pi)^{*} \, ,
\end{align*}
where $U_{\cH^n}(\pi)^{*}$ is the conjugate transpose of $U_{\cH^n}(\pi)$. Moreover, the permutation matrices are real, thus $U_{\cH^n}(\pi)^{*} = U_{\cH^n}(\pi)^{T}$ for all $\pi \in \mS_n$. A multipartite operator $\rho_{\cH^n} \in \Lin \left(\cH^{\otimes n} \right)$ is called \emph{symmetric} if $\rho_{\cH^n} = U_{\cH^n}(\pi) \left(\rho_{\cH^n} \right)$ for all  $\pi \in \mS_n \, $.

\subsection{Approximate quantum error correction}
\label{subsec:fidelity}
In this section, we briefly present the mathematical setting of approximate quantum error correction. For further information, we refer to~\cite{berta2021semidefinite,borderi2022finetti}.
 
Let $A,B,\bar{A},\bar{B}$ be finite dimensional complex Hilbert spaces. Let $\cN_{\bar{A}\to B}$ ($\cN$ in short) be a quantum channel. The quantum channel fidelity (or short channel fidelity) for message dimension $M \in \NN$ is defined as
\begin{align*}
\mathrm{F}(\cN,M) \coloneqq \max_{\cD_{B \to \bar{B}} \, , \cE_{A \to \bar{A}}} & \,\, F \left( \Phi_{\bar{B}R}, ((\cD_{B \to \bar{B}} \circ \cN_{\bar{A}\to B} \circ \cE_{A \to \bar{A}})\otimes \cI_{R})(\Phi_{AR}) \right) \\
& \text{s.t. } \space \cD_{B \to \bar{B}} \, , \cE_{A \to \bar{A}} \, \, \, \, \text{quantum channels} \, ,  
\end{align*}
where $F(\rho,\sigma) \coloneqq \| \sqrt{\rho}\sqrt{\sigma} \|_{1}^{2}$ denotes the fidelity, $\Phi_{AR}$ denotes the maximally entangled state on $AR$, and $M = d_{A} = d_{\bar{B}} = d_{R}$. 

Using the Choi-Jamiolkowski isomorphism, the authors in~\cite{berta2021semidefinite} demonstrated that the channel fidelity can be reformulated as a bilinear optimization problem as follows. 
\begin{equation}
     \label{eq:channel_fidelity}
    \begin{split}
        \mathrm{F}(\cN,M) = \max_{E_{A\bar{A}},D_{B\bar{B}}} \, \, &d_{\bar{A}}d_{B} \cdot \tr \left[  (J_{\bar{A}B}^{\cN} \otimes \Phi_{A\bar{B}})(E_{A\bar{A}} \otimes D_{B\bar{B}} ) \right] \\
& \text{s.t. } \space E_{A\bar{A}} \succeq 0, \, , D_{B\bar{B}} \succeq 0 \, , \\
&\space E_{A} = \frac{\id_A}{d_{A}}, \, D_{B} = \frac{\id_B}{d_B} \, .
    \end{split}
\end{equation}
\par
By the linearity of the objective function, we can write the above program in an equivalent form that optimizes over a convex hull of feasible solutions~\cite{berta2021semidefinite} as follows.
\begin{equation} \label{eq:convex_form}
    \begin{split}
        \mathrm{F}(\cN,M) = \max \, \, &d_{\bar{A}}d_{B} \cdot \tr \left[  \left( J_{\bar{A}B}^{\cN} \otimes \Phi_{A\bar{B}}\right) \left( \sum_{i \in I} p_i E_{A\bar{A}}^{i} \otimes D_{B\bar{B}}^{i} \right) \right] \\
& \text{s.t. } p_i \geq 0 \, , \, \forall i \in I, \, \, \sum_{i \in I}p_i = 1 \, , \\
&\space E_{A\bar{A}}^{i} \succeq 0, \, D_{B\bar{B}}^{i} \succeq 0  \; \; \forall i \in I \, ,\\
&\space E_{A}^{i} = \frac{\id_A}{d_{A}}, \, D_{B}^{i} = \frac{\id_B}{d_B} \; \; \forall i \in I \, .
    \end{split}
\end{equation}
Program~\eqref{eq:convex_form} can be viewed as the maximization of a linear function over a subset of quantum bipartite states known as \emph{separable quantum states}. Approximating the set of separable states within the set of bipartite states is a hard problem in quantum information, as showed in \cite{Gharibian_10}. However, it is possible to approximate the set of separable states using a semidefinite programming hierarchy~\cite{doherty2002distinguishing,doherty2004complete}, and the convergence of this approximation can be analyzed with the \emph{finite quantum de Finetti theorem}~\cite{christandl2007one}. Similarly, in~\cite{berta2021semidefinite}, the authors considered the problem of approximating feasible solutions of Program~\eqref{eq:convex_form} and introduced a semidefinite programming hierarchy $\SDP_{n}(\cN,M)$ for this purpose as follows.   

\begin{equation} \label{eq:sdp_hierarchy} \tag{$\mathrm{SDP}_{n}(\cN,M)$}
	\begin{split}
		\mathrm{OPT}(\SDP_{n}(\cN,M)) \coloneqq &\max_{\rho_{A\bar{A}(B\bar{B})_{1}^{n}}} \; \; d_{\bar{A}}d_{B} \cdot \tr \left[(J_{\bar{A}B_1}^{\cN}\otimes \Phi_{A\bar{B}_1} )\rho_{A\bar{A}B_1 \bar{B}_1} \right]  \\
		&\text{s.t.} \; \;  \rho_{A\bar{A}(B\bar{B})_{1}^{n}} \succeq 0 \, \, , \tr \left[ \rho_{A\bar{A}(B\bar{B})_{1}^{n}}\right] = 1 \, ,  \\
		&  \rho_{A\bar{A}(B\bar{B})_{1}^{n}} =  \left( \id_{A\bar{A}} \otimes U_{(B\bar{B})_{1}^{n}}(\pi) \right) \left( \rho_{A\bar{A}(B\bar{B})_{1}^{n}}\right) \, \, \forall \pi \in \mS_n \, , \\
		 &  \rho_{A(B\bar{B})_{1}^{n}} = \frac{\id_A}{d_A}\otimes \rho_{(B\bar{B})_{1}^{n}} \, , \\
		& \rho_{A\bar{A}(B\bar{B})_{1}^{n-1}B_n} = \rho_{A\bar{A}(B\bar{B})_{1}^{n-1}} \otimes \frac{\id_{B_n}}{d_B} \, .
	\end{split}
\end{equation}  
Note that we identified $B_1 \coloneqq B$ in Program~\ref{eq:sdp_hierarchy} and recall that $B_1 \cong B_2 \cong \dots \cong B_n$  and $\bar{B}_1 \cong \bar{B}_2 \cong  \dots \cong \bar{B}_n$. In addition, by introducing novel finite quantum de Finetti theorems that allow imposing linear constraints on the approximating states,~\cite{berta2021semidefinite} provided a quantification of the convergence of Program~\ref{eq:sdp_hierarchy} to channel fidelity $\mathrm{F}(\cN, M)$. 

\begin{proposition}[\cite{berta2021semidefinite}]
	\label{pro:hierarchy_bound}
	Let $\cN_{\bar{A}\to B}$ be a quantum channel and $n,M \in \NN$. Then, we have
	\begin{align*}
		0 \leq \mathrm{OPT}(\SDP_{n}(\cN,M)) - \mathrm{F}(\cN,M) \leq \frac{\poly(d)}{\sqrt{n}} \, ,
	\end{align*}
	where $d = \max \{d_A= d_{\bar{B}} = M,d_{\bar{A}},d_B \}$. 
\end{proposition}

Noting that instead of extending the $B$-systems we could alternatively extend the $A$-systems, which leads to the following asymptotically converging semidefinite program hierarchy $\overline{\SDP}_{n}(\cN,M)$ for approximating $\mathrm{F}(\cN,M)$~\cite{berta2021semidefinite}. 

\begin{equation} \label{eq:sdp_hierarchy_Asystem} \tag{$\overline{\mathrm{SDP}}_{n}(\cN,M)$}
	\begin{split}
		\mathrm{OPT}(\overline{\SDP}_{n}(\cN,M)) \coloneqq &\max_{\rho_{(A\bar{A})_{1}^{n}B\bar{B}}} \; \; d_{\bar{A}}d_{B} \cdot \tr \left[(J_{\bar{A}B_1}^{\cN}\otimes \Phi_{A\bar{B}_1} )\rho_{A_1\bar{A}_1B \bar{B}} \right]  \\
		&\text{s.t.} \; \;  \rho_{(A\bar{A})_{1}^{n}B\bar{B}} \succeq 0 \, \, , \tr \left[ \rho_{(A\bar{A})_{1}^{n}B\bar{B}}\right] = 1 \, ,  \\
		&  \rho_{(A\bar{A})_{1}^{n}B\bar{B}} =  \left(U_{(A\bar{A})_{1}^{n}}(\pi) \otimes \id_{B\bar{B}} \right) \left( \rho_{(A\bar{A})_{1}^{n}B\bar{B}} \right) \, \, \forall \pi \in \mS_n \, , \\
		 &  \rho_{(A\bar{A})_{1}^{n}B} = \rho_{(A\bar{A})_{1}^{n}} \otimes \frac{\id_{B}}{d_{B}} \, , \\
		& \rho_{(A\bar{A})_{1}^{n-1}A_nB\bar{B}} = \frac{\id_{A_n}}{d_A} \otimes \rho_{(A\bar{A})_{1}^{n-1}B\bar{B}} \, .
	\end{split}
\end{equation}  

The Program~\ref{eq:sdp_hierarchy} and Program~\ref{eq:sdp_hierarchy_Asystem} are non-equivalent. However, the convergence guarantees in Proposition~\ref{pro:hierarchy_bound} still hold, i.e., $0 \leq \mathrm{OPT}(\overline{\SDP}_{n}(\cN,M)) - \mathrm{F}(\cN,M) \leq \frac{\poly(d)}{\sqrt{n}} \, ,$ where $d = \max \{d_A= d_{\bar{B}} = M,d_{\bar{A}},d_B\}$.
\par 
In this work, given $M = d_A = d_{\bar{B}} \in \NN$, we consider the problem of computing $\SDP_n(\cN,M)$ where the output dimension $d_{B}$ of $\cN_{\bar{A}\to B}$ is fixed. In this case, following  Proposition~\ref{pro:hierarchy_bound}, one has 
\begin{align*}
    0 \leq \mathrm{OPT}(\SDP_{n}(\cN,M)) - \mathrm{F}(\cN,M) \leq \frac{\poly(d_{\bar{A}})}{\sqrt{n}} \, ,
\end{align*}
where $d_{\bar{A}}$ is the input dimension of the channel $\cN$. In particular, $\poly(d_{\bar{A}})$ is at most $\bigO(d_{\bar{A}}\sqrt{\log d_{\bar{A}}})$ following~\cite[Theorem 4.2.1]{borderi2022finetti}. Therefore, by choosing $n = \frac{\poly(d_{\bar{A}})}{\epsilon^2}$, we can estimate $\mathrm{F}(\cN, M)$ with an accuracy of $\epsilon$ by solving $\SDP_{n}(\cN, M)$. However, the size of the matrix variables in $\SDP_{n}(\cN, M)$ grows exponentially with $n$. Thus, solving $\SDP_{n}(\cN, M)$ directly would require exponential time in terms of $n$. As a consequence, estimating $\mathrm{F}(\cN, M)$ with an accuracy of $\epsilon$  requires $\exp \left ( 1/\epsilon, d_{\bar{A}}\right)$ time by solving the Program~\ref{eq:sdp_hierarchy} directly.  In this paper, we provide an efficient method to compute the optimal value of the program $\SDP_{n}(\cN, M)$, which can be done in $\poly(d_{\bar{A}},n)$ time.

\begin{remark}
    For a fixed $M$ and the output dimension $d_{B}$ of $\cN_{\bar{A}\to B}$, the channel fidelity $\mathrm{F}(\cN,M)$ can be determined in exponential time in terms of $d_{\bar{A}}$ by solving Program~\eqref{eq:channel_fidelity} using a brute-force algorithm, as it forms a constrained polynomial optimization problem. Some methods for solving constrained polynomial optimization problems can be found at~\cite{lasserre2015introduction}. 
\end{remark}
\begin{remark} 
Noting that our method in this work can be adapted to the case where $M$ and the input dimension $d_{\bar{A}}$ of $\cN_{\bar{A}\to B}$ are fixed. In this scenario, the same strategy can be employed to exploit the symmetries in Program~\ref{eq:sdp_hierarchy_Asystem}, yielding $\text{poly}(n,d_{B})$ time complexity for solving $\overline{\text{SDP}}_{n}(\cN,M)$.
\end{remark}


\subsection{Elementary Representation Theory} \label{sec:Tools}

In this subsection, we present the essential mathematical foundation for exploiting symmetries in a semidefinite program to represent the program effectively. Most of the concepts and notation in this section can be found in~\cite{litjens2017semidefinite,fawzi2022hierarchy}, but we include them here for self-containedness. For further information, readers interested in this topic can refer to references such as~\cite[Chapter 9]{Bachoc2012}.


\subsection*{Representation theory}

   We recall some basic facts and notions in the representation theory of finite groups. For further information, we refer the reader to Refs.~\cite{rep_1977} and~\cite{Fulton1991}. Let $G$ be a finite group and $\cH$ be a finite dimensional complex Hilbert space. A group homomorphism $\varrho: G \to \GL(\cH)$ is called a \emph{linear representation} of $G$ on $\cH$, where $\GL(\cH)$ is the general linear group on $\cH$. The space $\cH$ is called a \emph{$G$-module}. For $v \in \cH$ and $g\in G$, we write $g\cdot v$ as shorthand for $\varrho(g)v$. For $X\in\Lin(\cH)$, the action of $g\in G$ on $X$ is given by $\varrho(g)X\varrho(g)^{*}$.
     
   A representation $\varrho: G\to \GL(\cH)$ of $G$ is called \emph{irreducible} if for any subspace $\cH'$ of $\cH$ then $ \{g\cH': g\in G \} \subsetneq \cH'$, in other words, it contains no proper submodule $\cH'$ of $\cH$ such that $g \cH' \subseteq \cH'$. Let $\cH$ and $\cH'$ be $G$-modules, a \emph{$G$-equivariant map} from $\cH$ to $\cH'$ is a linear map $\phi: \cH \to \cH'$  such that $g \cdot \phi(v) = \phi(g \cdot v)$ for all $g \in G, v\in \cH$.  Two $G$-modules $\cH$ and $\cH'$ are called \emph{$G$-isomorphic}, write $\cH \cong \cH'$, if there is a bijective equivariant map from $\cH$ to $\cH'$. We denote by $\End^{G}(\cH)$, the set of all $G$-equivariant maps from $\cH$ to $\cH$, i.e.,  
\begin{align*} 
	\End^{G}(\cH) = \{T \in \Lin(\cH): T(g\cdot v) = g\cdot T(v), \forall v \in \cH, g \in G \}.  
\end{align*}
Note that $\End^{G}(\cH)$ is known to form a matrix $*$-algebra~\cite{Bachoc2012}, that is, a set of complex matrices that is closed under addition, scalar multiplication, matrix multiplication, and taking the conjugate transpose. 
\par 

Let $G$ be a finite group acting on a finite dimensional complex vector space $\cH$. Then the space $\cH$ can be decomposed into a direct sum of subspaces as $\cH = \cH_1 \oplus \dots \oplus \cH_t$, where $\cH_1,\dots,\cH_t$ are $G$-modules and called the \emph{$G$-isotypical component}. In more detail, every $\cH_i$ is a direct sum of irreducible $G$-modules denoted as $\cH_{i,1} \oplus \dots \oplus \cH_{i,m_i}$ and note that two irreducible $G$-modules $\cH_{i,j}$ and $\cH_{i',j'}$ are isomorphic (i.e., $\cH_{i,j} \cong \cH_{i',j'}$) if and only if $i=i'$. The tuple $(m_1,\dots,m_t)$ are called \emph{multiplicities} of the corresponding irreducible representations.
\par  
For each $i \in [t]$ and $j \in [m_i]$, let $u_{i,j} \in \cH_{i,j}$ be a nonzero vector such that for each $i$ and all $j,j' \in [m_i]$, there is a bijective $G$-equivariant map from $\cH_{i,j}$ to $\cH_{i,j'}$ that maps $u_{i,j}$ to $u_{i,j'}$. For $i \in [t]$, we define a matrix $U_i$ as $[u_{i,1},\dots,u_{i,m_i}]$, with $u_{i,j}$ forming the $j$-th column of $U_i$. 
The matrix set $\{U_1,\dots,U_t\}$ obtained in this way is called a \emph{representative matrix set} for the action of $G$ on $\cH$. The columns of the matrices $U_i$ can be viewed as elements of the dual space $\cH^{*}$ (by taking the standard inner product). Then each $U_i$ is an ordered set of linear functions on $\cH$. Since $\cH_{i,j}$ is the linear space spanned by $G \cdot u_{i,j}$ (for each $i,j$), we have 
\begin{align*}
    \cH = \bigoplus_{i=1}^{t}\bigoplus_{j=1}^{m_i} \CC G \cdot u_{i,j} \; ,
\end{align*}
 where $\CC G=\cbr{\sum_{g\in G}\alpha_g g: \alpha_g\in \CC}$ denotes the complex group algebra of $G$.  Furthermore, we have
 \begin{align} \label{eq:dim_END vs multiplicites}
 	\dim \End^{G}(\cH) = \dim \End^{G}\left(  \bigoplus_{i=1}^{t}\bigoplus_{j=1}^{m_i} \cH_{i,j}  \right) = \sum_{i=1}^{t} m_{i}^2 \, .
 \end{align}

Note that with the action of the finite group $G$ on the space $\cH$, any inner product $\< \, , \>$ on $\cH$ gives rise to a $G$-invariant inner product $\< \, ,\>_{G}$ on $\cH$ via the rule $\<x,y\>_{G} \coloneqq \frac{1}{|G|}\sum_{g \in G}\<g \cdot x,g\cdot y\>$. Let $\<\, ,\>$ be a $G$-invariant inner product on $\cH$ and $\{U_1,\dots,U_t\}$ be a representative matrix set for the action of $G$ on $\cH$. Consider the linear map $\psi : \End^{G}(\cH) \to \bigoplus_{i=1}^{t}\CC^{m_i \times m_i}$ defined as
\begin{align}
	\label{eq:block_diagonal_matrix}
	\psi(X) \coloneqq \bigoplus_{i=1}^{t} \left( \<Xu_{i,j'},u_{i,j}\>\right)_{j,j'=1}^{m_i} \;,\; \forall X\in \End^{G}(\cH) \enspace.
\end{align} 
For $i \in [t]$ and $X\in \End^{G}(\cH)$, we denote the matrix $\left( \<Xu_{i,j'},u_{i,j}\>\right)_{j,j'=1}^{m_i}$ corresponding to the $i$-th block of $\psi(X)$ by $\llbracket \psi(X) \rrbracket_i$. 
\begin{lemma}[Proposition 2.4.4, \cite{Polak_thesis}]
	\label{lemma:psd_preserving}
	The linear map $\psi$ of Eq.~\eqref{eq:block_diagonal_matrix} is bijective and for every $X \in \End^{G}(\cH)$, we have $X \succeq 0$ if and only if $\psi(X) \succeq 0$. Moreover, there is a unitary matrix $U$ such that
	\begin{align*}
			U^{*}XU = \bigoplus_{i=1}^{t} \bigoplus_{j=1}^{m_i} \llbracket \psi(X) \rrbracket_i \;,\; \forall X\in \End^{G}(\cH) \enspace ,
	\end{align*}
	where $m_{i} = \dim(\cH_{i,1})$, for every $i \in [t]$.
\end{lemma}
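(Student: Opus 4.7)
The plan is to combine Schur's lemma with the isotypical decomposition $\cH = \bigoplus_{i=1}^{t}\bigoplus_{j=1}^{m_i} \cH_{i,j}$ to produce a global orthonormal basis of $\cH$ in which every $A \in \End^{G}(\cH)$ becomes simultaneously block-diagonal with controlled blocks, and then to recognize $\psi$ as the linear map that reads off those blocks.

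First, I would fix a $G$-invariant inner product on $\cH$ (which exists by the averaging procedure recalled in the text) and assume without loss of generality that each $u_{i,j}$ is a unit vector. For every $i$ and every pair $j,j' \in [m_i]$, Schur's lemma guarantees that the space of $G$-equivariant maps $\cH_{i,j'} \to \cH_{i,j}$ is one-dimensional; let $\phi_{j,j'}^{(i)}$ be the unique representative with $\phi_{j,j'}^{(i)}(u_{i,j'}) = u_{i,j}$. Because the $G$-invariant inner product is unique up to scaling on each irreducible, these $\phi_{j,j'}^{(i)}$ are unitary, and by uniqueness they satisfy $\phi_{j,j'}^{(i)} \circ \phi_{j',j''}^{(i)} = \phi_{j,j''}^{(i)}$. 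Now pick any orthonormal basis $(e_{i,1,k})_{k=1}^{d_i}$ of $\cH_{i,1}$ with $e_{i,1,1} = u_{i,1}$ and transport it by setting $e_{i,j,k} \coloneqq \phi_{j,1}^{(i)}(e_{i,1,k})$; this yields compatible orthonormal bases of all the $\cH_{i,j}$.

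Second, for any $A \in \End^{G}(\cH)$, Schur's lemma forces $A$ to vanish between non-isomorphic irreducibles, and within the $i$-th isotypical component the piece sending $\cH_{i,j'} \to \cH_{i,j}$ must be a scalar multiple $\lambda_{j,j'}^{(i)} \phi_{j,j'}^{(i)}$. Applying this to $u_{i,j'}$ and pairing against $u_{i,j}$ gives $\lambda_{j,j'}^{(i)} = \<A u_{i,j'}, u_{i,j}\>$, so the $i$-th block of $\psi(A)$ is exactly $[\lambda_{j,j'}^{(i)}]_{j,j'=1}^{m_i}$. Using the cocycle identity for the $\phi^{(i)}$'s a direct computation yields $A e_{i,j',k} = \sum_{j=1}^{m_i} \lambda_{j,j'}^{(i)} e_{i,j,k}$, so for each fixed pair $(i,k)$ the subspace $\mathrm{span}\{e_{i,j,k} : j \in [m_i]\}$ is $A$-invariant and the restriction of $A$ to it has matrix $\llbracket \psi(A)\rrbracket_i$ in the basis $(e_{i,j,k})_j$. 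Ordering the basis lexicographically as $(i,k,j)$ and letting $U$ be the unitary change of basis from the standard basis of $\cH$ to this reordered basis, we obtain $U^{*}AU = \bigoplus_{i=1}^{t}\bigoplus_{k=1}^{d_i} \llbracket \psi(A)\rrbracket_i$, matching the stated form.

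Finally, bijectivity of $\psi$ follows from a dimension match: injectivity is immediate, because $\psi(A) = 0$ forces every $\lambda_{j,j'}^{(i)}$ to vanish and hence $A = 0$, while $\dim \End^{G}(\cH) = \sum_{i} m_i^2 = \dim \bigoplus_{i} \CC^{m_i \times m_i}$ by Eq.~\eqref{eq:dim_END vs multiplicites}. The equivalence $A \succeq 0 \iff \psi(A) \succeq 0$ is then immediate from the unitary conjugation established in step two: the spectrum of $A$ equals the union, with multiplicities $d_i$, of the spectra of the blocks $\llbracket \psi(A)\rrbracket_i$. The only delicate point I anticipate is the normalization bookkeeping in step one; one must scale the $u_{i,j}$ and the equivariant isomorphisms consistently so that the entries $\<A u_{i,j'},u_{i,j}\>$ recover the Schur scalars exactly rather than up to a diagonal rescaling, but this is a matter of careful setup rather than a substantive obstacle.
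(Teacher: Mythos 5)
Your overall strategy (Schur's lemma plus transported bases) is the right kind of argument, but note first that the paper itself offers no proof of this lemma --- it is imported verbatim from \cite{Polak_thesis} --- so your attempt can only be judged on its own terms, and on those terms it has a genuine gap at precisely the point you set aside as ``normalization bookkeeping.'' Your extraction of the Schur scalars reads $\<Au_{i,j'},u_{i,j}\> = \sum_{j''}\lambda^{(i)}_{j'',j'}\<u_{i,j''},u_{i,j}\>$, and this equals $\lambda^{(i)}_{j,j'}$ only if the vectors $u_{i,1},\dots,u_{i,m_i}$ are orthonormal; likewise your family $(e_{i,j,k})$ is an orthonormal basis of $\cH$ (so that the change of basis is unitary) only if the summands $\cH_{i,j}$ inside each isotypical component are pairwise orthogonal. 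Neither property is part of the definition of a representative set used here, and it is not ``WLOG'': rescaling or re-choosing the $u_{i,j}$ changes the map $\psi$ of Eq.~\eqref{eq:block_diagonal_matrix} itself --- its blocks get conjugated by the (generally non-unitary) Gram data of the new vectors, a congruence rather than a unitary similarity. Concretely, for the trivial group on $\CC^2$ with $u_{1,1}=e_1$ and $u_{1,2}=(e_1+e_2)/\sqrt{2}$ (a legitimate representative set under the paper's definition, with unit vectors) one has $\llbracket\psi(I)\rrbracket_1\neq I$, so no unitary $U$ can realize the displayed identity; your step two silently excludes such configurations. This is not a vacuous worry for the paper either: the representative set actually used later, built from the vectors $u_\tau$ of Eq.~\eqref{eq:symmetric_representative_set}, is not orthonormal.

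The second consequence is structural: you deduce bijectivity and the equivalence $A\succeq 0 \iff \psi(A)\succeq 0$ from the unitary conjugation, so these parts --- the ones the paper actually uses --- inherit the gap, even though they do hold for an arbitrary representative set. The standard route is to prove them directly: $\llbracket\psi(A)\rrbracket_i = U_i^{*}AU_i$ gives $A\succeq 0\Rightarrow\psi(A)\succeq 0$ at once, while the converse and injectivity/surjectivity use equivariance together with $\cH=\bigoplus_{i,j}\CC G\cdot u_{i,j}$ and the dimension count of Eq.~\eqref{eq:dim_END vs multiplicites}; the unitary block-diagonalization is then a separate statement valid after replacing the $u_{i,j}$ by an orthonormal system of representatives (i.e.\ exactly under your implicit hypotheses), or with ``unitary'' weakened to ``invertible'' in general. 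To repair your write-up, either state and justify the extra hypotheses (pairwise orthogonal $\cH_{i,j}$, unit $u_{i,j}$) --- after which your argument is correct --- or decouple the bijectivity/PSD claims from the unitary similarity as above.
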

Lemma~\ref{lemma:psd_preserving} plays a crucial role in our symmetry reductions. Noting that $\dim (\End^{G}(\cH)) = \sum_{i=1}^{t}m_{i}^{2}$ can be significantly smaller than the dimension of $\cH$. Furthermore, thanks to this lemma, we can simplify the task of verifying whether a matrix $X\in \End^{G}(\cH)$ is positive semidefinite. This simplification involves checking if the smaller $m_i\times m_i$ matrices $\llbracket \psi(X) \rrbracket_i$ are positive semidefinite for each $i\in[t]$. The following result, which was mentioned in~\cite{Polak_thesis}, helps construct a representative set of direct product groups.

\begin{lemma} \label{lem:direct_groups_representative}
	Let $G_1$ and $G_2$ be two finite groups. Let $\{U_{1}^{(1)},\dots,U_{k_1}^{(1)}\}$ and $\{U_{1}^{(2)},\dots,U_{k_2}^{(2)} \}$ be the representative matrix sets that correspond to the action of $G_1$ and $G_2$ on $\cH_1$ and $\cH_2$, respectively. Then 
	\begin{align} 
		\{ U_{i}^{(1)} \otimes U_{j}^{(2)} : i=1,\dots,k_1, \, j = 1,\dots,k_2 \} \label{eq:product_representative}
	\end{align}
	is representative matrix set for the action of $G_1 \times G_2$ on $\cH_1 \otimes \cH_2$. 
\end{lemma}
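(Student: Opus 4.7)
The plan is to reduce this to the classical fact from finite-group representation theory that the irreducible complex representations of a direct product $G_1 \times G_2$ are exactly the outer tensor products of irreducible representations of $G_1$ and $G_2$. Once this is in hand, a representative set for the action of $G_1\times G_2$ on $\cH_1 \otimes \cH_2$ can be read off by tensoring the representatives of the two factors.

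Concretely, I would start from the $G_1$- and $G_2$-isotypical decompositions
\begin{align*}
\cH_1 \;=\; \bigoplus_{i=1}^{k_1}\bigoplus_{j=1}^{m_i^{(1)}} \cH_{i,j}^{(1)} \, , \qquad \cH_2 \;=\; \bigoplus_{k=1}^{k_2}\bigoplus_{l=1}^{m_k^{(2)}} \cH_{k,l}^{(2)} \, ,
\end{align*}
and tensor them to obtain $\cH_1 \otimes \cH_2 = \bigoplus_{i,k}\bigoplus_{j,l} \cH_{i,j}^{(1)} \otimes \cH_{k,l}^{(2)}$. Each summand $\cH_{i,j}^{(1)} \otimes \cH_{k,l}^{(2)}$ is then an irreducible $G_1\times G_2$-module, and two such summands are $G_1\times G_2$-isomorphic if and only if they share the same pair of indices $(i,k)$. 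Hence the isotypical components of $\cH_1\otimes \cH_2$ under $G_1 \times G_2$ are indexed by pairs $(i,k)$, each occurring with multiplicity $m_i^{(1)} \cdot m_k^{(2)}$.

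Next, given the $G_1$-equivariant bijections $\phi_1:\cH_{i,j}^{(1)} \to \cH_{i,j'}^{(1)}$ that send $u_{i,j}^{(1)}$ to $u_{i,j'}^{(1)}$, and the $G_2$-equivariant bijections $\phi_2:\cH_{k,l}^{(2)} \to \cH_{k,l'}^{(2)}$ that send $u_{k,l}^{(2)}$ to $u_{k,l'}^{(2)}$, the tensor product $\phi_1\otimes \phi_2$ is a $G_1\times G_2$-equivariant bijection sending $u_{i,j}^{(1)}\otimes u_{k,l}^{(2)}$ to $u_{i,j'}^{(1)} \otimes u_{k,l'}^{(2)}$. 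This is exactly the property required of a representative: for the $(i,k)$-th isotypical component, the vectors $u_{i,j}^{(1)}\otimes u_{k,l}^{(2)}$ (for $(j,l)\in [m_i^{(1)}]\times [m_k^{(2)}]$) are equivariant images of one another. Since the columns of the Kronecker product $U_i^{(1)}\otimes U_k^{(2)}$ are precisely these tensor products, the set displayed in Eq.~\eqref{eq:product_representative} is a representative matrix set for $G_1\times G_2$ acting on $\cH_1\otimes \cH_2$.

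The only nontrivial ingredient is the irreducibility and classification statement for direct-product representations; this follows from Schur's lemma combined with character theory and can be cited from a standard reference such as Fulton-Harris. Everything else is routine bookkeeping with tensor products of equivariant maps, so I do not anticipate a genuine obstacle; the proof is essentially just a careful transcription of the decompositions.
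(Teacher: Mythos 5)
Your proposal is correct and follows essentially the same route as the paper's proof in Appendix~\ref{Apen:representative_set}: tensor the two isotypical decompositions, invoke the classical classification of irreducibles of a direct product (the paper cites Serre~\cite[Theorem 10]{rep_1977}, you cite Fulton--Harris), and observe that $\phi_1\otimes\phi_2$ gives the required $G_1\times G_2$-equivariant bijections between the summands with matching indices. No gaps.
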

\begin{proof}
	The proof can be found in Appendix~\ref{Apen:representative_set}. 
\end{proof}

\subsection*{Representation theory of the symmetric group}

Fix $n \in \NN$ and a finite-dimensional vector space $\cH$ with $\dim(\cH) = d$. We consider the natural action of the symmetric group $\mathfrak{S}_n$ on $\cH^{\otimes n}$ by permuting the indices, i.e.,
\begin{equation}
     \label{eq:natural_action}
    \pi \cdot (h_1 \otimes \dots \otimes h_n) = h_{\pi^{-1}(1)}\otimes \dots \otimes h_{\pi^{-1}(n)} \, , h_i \in \cH \, , \forall \pi \in \mathfrak{S}_n \, .
\end{equation}
	
Based on representation theory of the symmetric group, we describe a representative set for the action of $\mathfrak{S}_n$ on $\cH^{\otimes n}$. Many of the concepts and notation are based on the method for symmetry
reduction from~\cite{litjens2017semidefinite}. The concepts we describe in this section will be used throughout this paper

\par 

A \emph{partition} $\lambda$ of $n$ is a sequence $(\lambda_1,\dots,\lambda_d)$ of natural numbers with $\lambda_1 \geq \dots \geq \lambda_d>0$ and $\lambda_1+\dots+\lambda_d = n$. The number $d$ is called the \emph{height} of $\lambda$. We write $\lambda \vdash_{d}n$ if $\lambda$ is a partition of $n$ with at most $d$ parts. Let $\mathrm{Par}(d,n) \coloneqq \{ \lambda\,:\,\lambda \vdash_{d} n \}$. The \emph{Young shape} $Y(\lambda)$ of $\lambda$ is the set
\begin{align*}
	Y(\lambda)\coloneqq \{(i,j) \in \NN^2: 1 \leq j \leq d, 1 \leq i \leq \lambda_j \} \, .
\end{align*}

Following the French notation~\cite{procesi2007lie}, for an index $j_0 \in [d]$, the $j_0$-th \emph{row} of $Y(\lambda)$ is set of elements $(i,j_0)$ in $Y(\lambda)$. Similarly, fixing an element $i_0 \in [\lambda_1]$, the $i_0$-th \emph{column} of $Y(\lambda)$ is set of elements $(i_0,j)$ in $Y(\lambda)$. We label the elements in $Y(\lambda)$ from $1$ to $n$ according the lexicographic order on their positions. Then the \emph{row stabilizer} $R_{\lambda}$ of $\lambda$ is the group of permutations $\pi$ of $Y(\lambda)$ with $\pi(L) = L$ for each row $L$ of $Y(\lambda)$. Similarly, the \emph{column stabilizer} $C_{\lambda}$ of $\lambda$ is the group of permutations $\pi$ of $Y(\lambda)$ with $\pi(L) = L$ for each column $L$ of $Y(\lambda)$. 
\par 
For $\lambda \vdash_{d} n$, a \emph{$\lambda$-tableau} is a function $\tau: Y(\lambda) \to \NN$. A $\lambda$-tableau is \emph{semistandard} if the entries are non-decreasing in each row and strictly increasing in each column. Let $T_{\lambda,d}$ be the collection of semistandard $\lambda$-tableaux with entries in $[d]$. We write $\tau \sim \tau'$ for $\lambda$-tableaux $\tau,\tau'$ if $\tau' = \tau r$ for some $r \in R_{\lambda}$. Let $e_1,\dots,e_d$ be the standard basis of $\cH$. For any $\tau \in T_{\lambda,d}$, define $u_{\tau}\in \cH^{\otimes n}$ as
\begin{align}
	u_{\tau} \coloneqq \sum_{\tau' \sim \tau}\sum_{c \in C_{\lambda}} \mathrm{sgn}(c) \bigotimes_{y \in Y(\lambda)}e_{\tau'(c(y))} \, .
\end{align}
Here the Young shape $Y(\lambda)$ is ordered by concatenating its rows. Then the matrix set
\begin{align}
	\label{eq:symmetric_representative_set}
	\{U_{\lambda}: \lambda \vdash_{d} n \} \, \, \text{ with } U_{\lambda} = [u_{\tau}: \tau \in T_{\lambda,d}]
\end{align}

is a representative matrix set for the natural action of $\mathfrak{S}_n$ on $\cH^{\otimes n}$ \cite[Section 2.1]{litjens2017semidefinite}. Moreover, we have
\begin{align}
	\label{eq:number_partitions}
	|\mathrm{Par}(d,n)| \leq (n+1)^d \text{ and } |T_{\lambda,d}| \leq (n+1)^{d(d-1)/2} \, \, , \forall \lambda \in \mathrm{Par}(d,n) \, .
\end{align}
From Eq.~\eqref{eq:number_partitions}, if $d$ is fixed, then both $|\mathrm{Par}(d,n)|$ and $|T_{\lambda,d}|$ are bounded by $\poly(n)$.  

\section{Efficient approximation of channel fidelity}
\label{sec:exloiting_symmetries}

Given Hilbert space $\cH$, we follow the notation in Section~\ref{sec:preliminaries} and consider the action of symmetric group $\mS_n$ on $\cH^{\otimes n}$ defined in Eq.~\eqref{eq:natural_action}. Throughout this paper, we work with spaces of the form $\cA \otimes \cH^{\otimes n} \otimes \cB$ for some Hilbert spaces $\cA,\cB$. A multipartite operator $\rho_{\cA \cH^n \cB} \in \Lin \left( \cA \otimes \cH^{\otimes n} \otimes \cB \right)$ is said to be \emph{symmetric with respect to} $\cA$ and $\cB$, if it is invariant under permutation of $\cH$-systems while keeping $\cA$ and $\cB$ fixed. In particular, 
\begin{align*}
    \rho_{\cA \cH^n \cB} &= \left(\id_{\cA} \otimes U_{\cH^n}(\pi)\otimes \id_{\cB} \right) \left( \rho_{\cA \cH^n \cB}  \right) \\
                         &\coloneqq \left(\id_{\cA} \otimes U_{\cH^n}(\pi)\otimes \id_{\cB} \right) \rho_{\cA \cH^n \cB} \left(\id_{\cA} \otimes U_{\cH^n}(\pi)^{* }\otimes \id_{\cB} \right) \;  \forall  \pi \in \mS_n \, .
\end{align*}

\par 
The invariant subspace under the action of $\mS_n$ is given by
\begin{align*}
	\End^{\mS_n} \left(\cA \otimes \cH^{\otimes n} \otimes \cB \right) \coloneqq \left\{ \rho \in \Lin \left( \cA \otimes \cH^{\otimes n} \otimes \cB \right):  \rho = \left(\id_{\cA} \otimes U_{\cH^n}(\pi)\otimes \id_{\cB} \right) \left( \rho  \right) \enspace \text{ for all } \pi \in \mS_n \,  \right\} \, .
\end{align*}
We first show in the following lemma that the search space of Program~\ref{eq:sdp_hierarchy} can be restricted to the invariant subspaces that arising under the action of $\mS_n$.   

\begin{lemma} \label{lem:invariant_matrices}
For $n \in \NN_{\geq 1}$, if $\rho_{A\bar{A}(B\bar{B})_{1}^n} \in \End^{\mS_n} \left(  A\bar{A} \otimes (B\bar{B})^{\otimes n} \right)$, then
\begin{align}
&\rho_{A(B\bar{B})_{1}^{n}} \, , \, \frac{\id_A}{d_A} \otimes \rho_{(B\bar{B})_{1}^{n}} \in \End^{\mS_n} \left(A \otimes (B\bar{B})^{\otimes n} \right) \label{eq:invariant1}  \, \, ,\\
& \rho_{A\bar{A}(B\bar{B})_{1}^{n-1}B_n} \, , \, \rho_{A\bar{A}(B\bar{B})_{1}^{n-1}} \otimes \frac{\id_{B_n}}{d_{B}} \in \End^{\mS_{n-1}} \left(A\bar{A} \otimes (B\bar{B})^{\otimes (n-1)}\otimes B_n  \right) \label{eq:invariant2} \, .
\end{align}
\end{lemma}
\begin{proof}
	To prove Claim~\eqref{eq:invariant1}, we express $\rho_{A\bar{A}(B\bar{B})_{1}^{n}}$ in terms of the standard bases of $\Lin(A)$ and $\Lin(\bar{A})$. Specifically, we write
	\begin{align*}
		\rho_{A\bar{A}(B\bar{B})_{1}^{n}} = \sum_{\substack{i,j \in [d_{A}] \\ x,y \in [d_{\bar{A}}]} }\ketbra{i}{j} \otimes \ketbra{x}{y} \otimes \rho^{i,j,x,y} \, .
	\end{align*}
	Taking the partial trace over $\bar{A}$, we obtain
	\begin{align*}
		\rho_{\bar{A}(B\bar{B})_{1}^{n}} \coloneqq \tr_{\bar{A}}(\rho_{A\bar{A}(B\bar{B})_{1}^{n}}) = \sum_{\substack{i,j \in [d_A] \\ x \in [d_{\bar{A}}]}}\ketbra{i}{j} \otimes \rho^{i,j,x,x} \, .
	\end{align*}
	Moreover, since $\rho_{A\bar{A}(B\bar{B})_{1}^{n}} \in \End^{\mS_n} \left(  A\bar{A} \otimes (B\bar{B})^{\otimes n} \right)$, we have
	\begin{align*}
		\rho_{A\bar{A}(B\bar{B})_{1}^{n}} = \sum_{\substack{i,j \in [d_{A}] \\ x,y \in [d_{\bar{A}}]} }\ketbra{i}{j} \otimes \ketbra{x}{y} \otimes U_{(B\bar{B})_{1}^{n}}(\pi)\rho^{i,j,x,y}U_{(B\bar{B})^{n}}(\pi)^{*} \text{ for any } \pi \in \mS_n \, .
	\end{align*}
	Therefore,   
	\begin{align*}
		\rho_{\bar{A}(B\bar{B})_{1}^{n}} &\coloneqq \tr_{\bar{A}}(\rho_{A\bar{A}(B\bar{B})_{1}^{n}})  \\
		&= \sum_{\substack{i,j \in [d_A]\\ x \in [d_{\bar{A}}]}}\ketbra{i}{j} \otimes U_{(B\bar{B})_{1}^{n}}(\pi)\rho^{i,j,x,x}U_{(B\bar{B})_{1}^{n}}(\pi)^{*} \\
		& = 
		\left(\id_A \otimes U_{(B\bar{B})_{1}^{n}}(\pi) \right) \left(\sum_{\substack{i,j \in [d_A] \\ x \in [d_{\bar{A}}]}}\ketbra{i}{j} \otimes \rho^{i,j,x,x} \right) \left(\id_A \otimes U_{(B\bar{B})_{1}^{n}}(\pi)^{*} \right) \\
		&= \left(\id_A \otimes U_{(B\bar{B})_{1}^{n}}(\pi) \right) \left(\tr_{\bar{A}}(\rho_{A\bar{A}(B\bar{B})_{1}^{n}}) \right) \left(\id_A \otimes U_{(B\bar{B})_{1}^{n}}(\pi)^{*} \right) \,.
	\end{align*}
	Since this equality holds for any $\pi \in \mS_n$, we conclude that
	\begin{align*}
		\rho_{\bar{A}(B\bar{B})_{1}^{n}} \coloneqq \tr_{\bar{A}}(\rho_{A\bar{A}(B\bar{B})_{1}^{n}}) \in \End^{\mS_n} \left(A \otimes (B\bar{B})^{\otimes n} \right) \, .
	\end{align*}
	
	Similarly, we also have
	\begin{align*}
		\frac{\id_A}{d_A} \otimes \rho_{(B\bar{B})_{1}^{n}} \in \End^{\mS_n} \left(A \otimes (B\bar{B})^{\otimes n} \right) \, .
	\end{align*}
	This proves the Claim~\eqref{eq:invariant1}.
	\par 
	
	To prove Claim~\eqref{eq:invariant2}, we write
	\begin{align*}
		\rho_{A\bar{A}(B\bar{B})_{1}^{n}} = \sum_{\substack{i,j \in [d_{A\bar{A}}] \\ u,v \in [d_{B_n}] \\ x,y \in [d_{\bar{B}_n}]}} \ketbra{i}{j} \otimes \rho^{i,j,u,v,x,y} \otimes \ketbra{u}{v} \otimes \ketbra{x}{y} \, .
	\end{align*}
	Thus,
	\begin{align*}
		\rho_{A\bar{A}(B\bar{B})_{1}^{n-1}B_n} \coloneqq \tr_{\bar{B}_n} \left(\rho_{A\bar{A}(B\bar{B})_{1}^{n}} \right) = \sum_{\substack{i,j \in [d_{A\bar{A}}] \\ u,v \in [d_{B_n}] \\ x\in [d_{\bar{B}_n}]}} \ketbra{i}{j} \otimes \rho^{i,j,u,v,x,x} \otimes \ketbra{u}{v} \, .
	\end{align*}
	For any $\pi \in 
	\mS_{n-1}$, the permutation matrix  $U_{(B\bar{B})_{1}^{n-1}}(\pi) \otimes \id_{B_n \bar{B}_n}$ corresponds to a permutation in $\mS_n$ that acts on $(B\bar{B})_{1}^{n}$ and fixes the system $(B\bar{B})_n$. This leads to
	\begin{align*}
		\rho_{A\bar{A}(B\bar{B})_{1}^{n}} = \left( \id_{A\bar{A}} \otimes U_{(B\bar{B})_{1}^{n-1}}(\pi) \otimes \id_{B_n \bar{B}_n}\right)\left( \rho_{A\bar{A}(B\bar{B})_{1}^{n}} \right) \left( \id_{A\bar{A}} \otimes U_{(B\bar{B})_{1}^{n-1}}(\pi)^{*} \otimes \id_{B_n\bar{B}_n} \right) \, .
	\end{align*}
	Therefore, for any $\pi \in \mS_{n-1}$,
	\begin{align*}
		&\rho_{A\bar{A}(B\bar{B})_{1}^{n-1}B_n} \coloneqq \tr_{\bar{B}_n}(\rho_{A\bar{A}(B\bar{B})_{1}^{n}}) = \sum_{\substack{i,j \in [d_{A\bar{A}}] \\ u,v \in [d_{B_n}]\\ x\in [d_{\bar{B}_n}]}} \ketbra{i}{j} \otimes U_{(B\bar{B})_{1}^{n-1}}(\pi)\rho^{i,j,u,v,x,x}\otimes U_{(B\bar{B})_{1}^{n-1}}(\pi)^{*} \otimes \ketbra{u}{v} \\
		&= \left(\id_{A\bar{A}} \otimes  U_{(B\bar{B})_{1}^{n-1}}(\pi) \otimes \id_{B_n} \right) \left(\sum_{\substack{i,j \in [d_{A\bar{A}}] \\ u,v \in [d_{B_n}]\\ x\in [d_{\bar{B}_n}]}}  \ketbra{i}{j} \otimes \rho^{i,j,u,v,x,x} \otimes \ketbra{u}{v} \right) \left( \id_{A\bar{A}} \otimes U_{(B\bar{B})_{1}^{n-1}}(\pi)^{*} \otimes \id_{B_n}\right) \\
		&=\left(\id_{A\bar{A}} \otimes \otimes U_{(B\bar{B})_{1}^{n-1}}(\pi) \otimes \id_{B_n} \right) \left( \tr_{\bar{B}_n}(\rho_{A\bar{A}(B\bar{B})_{1}^{n}}\right) \left(\id_{A\bar{A}} \otimes U_{(B\bar{B})_{1}^{n-1}}(\pi)^{*} \otimes \id_{B_n} \right) \, .
	\end{align*}
	Hence,
	\begin{align*}
		\rho_{A\bar{A}(B\bar{B})_{1}^{n-1}B_n} \in \End^{\mS_{n-1}} \left(A\bar{A} \otimes (B\bar{B})^{\otimes (n-1)}\otimes B_n  \right) \, .
	\end{align*}
	Similarly, we also have
	\begin{align*}
		\rho_{A\bar{A}(B\bar{B})_{1}^{n-1}} \otimes\frac{\id_{B_n}}{d_{B}} \in \End^{\mS_{n-1}} \left(A\bar{A} \otimes (B\bar{B})^{\otimes (n-1)}\otimes B_n  \right) \, .
	\end{align*}
	This proves Claim~\eqref{eq:invariant2}. 
\end{proof}


From Lemma~\ref{lem:invariant_matrices}, we will use the tools from Section~\ref{sec:Tools} to simplify Program~\ref{eq:sdp_hierarchy}. 
\par
Let $\Par(d_{\cH},n) \coloneqq \{ \lambda: \lambda \vdash_{d_{\cH}} n \}$ and $m_{\lambda}(\cH) \coloneqq |T_{\lambda,d_{\cH}}|$ for each $\lambda \in \Par(d_{\cH},n)$. From Lemma~\ref{lem:direct_groups_representative}, the representative matrix set for the action of $\mS_n$ on the space $\cA \otimes \cH^{\otimes n} \otimes \cB$ is described as follows,
\begin{align}
	\label{eq:representative_set}
	\{\id_{\cA} \otimes U_{\lambda} \otimes \id_{\cB} \}_{\lambda \in \Par(d_{\cH},n)} \, ,
\end{align}
where $\{U_{\lambda}\}_{\lambda \in \Par(d_{\cH},n)}$ are real matrices defined in Eq.~\eqref{eq:symmetric_representative_set}.
\par 
For each $\lambda \in \Par(d_{\cH},n)$, let $m_{\lambda}(\cA,\cH^n,\cB)$ be the number of rows of the matrix $(\id_{\cA} \otimes U_{\lambda} \otimes \id_{\cB} )$. From Lemma~\ref{lemma:psd_preserving} and the representative matrix set constructed in Eq.~\eqref{eq:representative_set}, we have the following proposition.
 \begin{proposition} The following map
 	\label{pro:bijective_map}
 	\begin{align} 
 		\begin{split}
 			\label{eq:bijective_map}
 			\Psi_{\cA,\cH^n,\cB}: \End^{\mS_n}(\cA \otimes \cH^{\otimes n} \otimes \cB) &\to \bigoplus_{\lambda \vdash_{d_{\cH}}n} \CC^{m_{\lambda}(\cA,\cH^n,\cB) \times m_{\lambda}(\cA,\cH^n,\cB)} \\
 			X &\mapsto \bigoplus_{\lambda \vdash_{d_{\cH}}n} \left(\id_{\cA} \otimes U_{\lambda}^{T} \otimes \id_{\cB} \right)X \left( \id_{\cA} \otimes U_{\lambda} \otimes \id_{\cB}\right) \, ,
 		\end{split}
 	\end{align}
   is a bijective linear map that preserve the positive semidefinitess property, i.e., $X \succeq 0$ if and only if $\Psi_{\cA,\cH^n,\cB}(X) \succeq 0$. 
 \end{proposition}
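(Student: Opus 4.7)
The plan is to realize this proposition as a straightforward application of Lemma~\ref{lemma:psd_preserving} to the symmetric group action on $\cA \otimes \cH^{\otimes n} \otimes \cB$, once the right representative set has been identified via Lemma~\ref{lem:direct_groups_representative}. The symmetries at play here are exactly those of the direct product group $\{e\} \times \mS_n \times \{e\}$ acting on the three tensor factors, with trivial action on $\cA$ and $\cB$ and the natural permutation action on $\cH^{\otimes n}$. Since $U_{\cH^n}(\pi)$ is unitary, the standard inner product on $\cA \otimes \cH^{\otimes n} \otimes \cB$ is $\mS_n$-invariant, so the hypotheses of Lemma~\ref{lemma:psd_preserving} are satisfied.

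The first concrete step is to construct the representative set. For the trivial group acting on $\cA$ (respectively $\cB$), the entire space is a single isotypical component consisting of $d_\cA$ (respectively $d_\cB$) copies of the one-dimensional trivial representation, so a representative is given by the single matrix $\id_\cA$ (respectively $\id_\cB$), whose columns form the standard basis. For $\mS_n$ acting naturally on $\cH^{\otimes n}$, the representative is the family $\{U_\lambda : \lambda \vdash_{d_\cH} n\}$ from equation~\eqref{eq:symmetric_representative_set}. Applying Lemma~\ref{lem:direct_groups_representative} twice (or once, treating $\cA$ and $\cB$ symmetrically), the representative matrix set for the $\mS_n$-action on $\cA \otimes \cH^{\otimes n} \otimes \cB$ is precisely
\begin{align*}
\{ \id_\cA \otimes U_\lambda \otimes \id_\cB : \lambda \vdash_{d_\cH} n \},
\end{align*}
with corresponding multiplicities $m_\lambda(\cA, \cH^n, \cB) = d_\cA \cdot |T_{\lambda,d_\cH}| \cdot d_\cB$, which matches the target dimensions of the blocks in $\Psi_{\cA,\cH^n,\cB}$.

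With this representative set in hand, Lemma~\ref{lemma:psd_preserving} supplies a bijective, positive-semidefiniteness preserving linear map $\psi$ whose $\lambda$-th block, in the notation of equation~\eqref{eq:block_diagonal_matrix}, has $(j,j')$-entry $\langle A \cdot v_{\lambda,j'}, v_{\lambda,j}\rangle$, where $v_{\lambda,1}, \ldots$ enumerate the columns of $\id_\cA \otimes U_\lambda \otimes \id_\cB$. Written as a matrix product with the standard inner product, this block equals $(\id_\cA \otimes U_\lambda \otimes \id_\cB)^* \, A \, (\id_\cA \otimes U_\lambda \otimes \id_\cB)$. The only remaining identification is that the matrices $U_\lambda$ constructed in equation~\eqref{eq:symmetric_representative_set} have integer (hence real) entries, being signed sums of standard basis tensors; therefore $U_\lambda^* = U_\lambda^T$, and $\psi$ coincides with $\Psi_{\cA,\cH^n,\cB}$ as defined in equation~\eqref{eq:bijective_map}.

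The main obstacle, such as it is, is purely bookkeeping rather than conceptual: one must be careful to separate the trivial actions on $\cA$ and $\cB$ from the nontrivial action on $\cH^{\otimes n}$ when invoking Lemma~\ref{lem:direct_groups_representative}, and to verify that the multiplicities $m_\lambda(\cA, \cH^n, \cB)$ agree with the column count of $\id_\cA \otimes U_\lambda \otimes \id_\cB$. Apart from this, both the bijectivity and the PSD-preservation are direct consequences of Lemma~\ref{lemma:psd_preserving}, so no further work is required.
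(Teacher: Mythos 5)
Your proof is correct and follows the same route the paper intends: the paper presents the proposition as an immediate consequence of Lemma~\ref{lemma:psd_preserving} combined with the representative set~\eqref{eq:representative_set} (itself derived from Lemma~\ref{lem:direct_groups_representative}), and your argument simply spells out the details the paper leaves implicit, namely the trivial-group representatives $\id_\cA$, $\id_\cB$, the multiplicity count $d_\cA\,|T_{\lambda,d_\cH}|\,d_\cB$, and the fact that $U_\lambda$ is real so $U_\lambda^* = U_\lambda^T$.
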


 Moreover, as a consequences of Eqs.~\eqref{eq:number_partitions} and~\eqref{eq:dim_END vs multiplicites}, we have the following proposition.
\begin{proposition}
	\label{pro:bounded_contraints}
	We have:
	\begin{align}
		&|\Par(d_{\cH},n)| \leq (n+1)^{d_{\cH}} \; \label{eq:number_partion},\\
		&m_{\lambda}(\cA,\cH^n,\cB) \leq d_{\cA}d_{\cB}(n+1)^{d_{\cH}(d_{\cH}-1)/2} \, : \; \forall \lambda \in \Par(d_{\cH},n) \; , \label{eq:number_size} \\
		&m(\cA,\cH^n,\cB) \coloneqq \dim \left[\End^{\mS_n}\left(\cA \otimes \cH^{\otimes n} \otimes \cB \right) \right] \leq d_{\cA}^2d_{\cB}^2(n+1)^{d_{\cH}^2} \label{eq: number_dimension} \; .
	\end{align}
\end{proposition}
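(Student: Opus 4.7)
The three bounds in Proposition~\ref{pro:bounded_contraints} are essentially bookkeeping consequences of the representation theory already set up in the paper, so my plan is to assemble them by tracking dimensions in the representative set~\eqref{eq:representative_set} and invoking Eq.~\eqref{eq:number_partitions} and Eq.~\eqref{eq:dim_END vs multiplicites}.

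First I would prove~\eqref{eq:number_partion}: since a partition $\lambda \vdash_{d_{\cH}} n$ is determined by the tuple $(\lambda_1,\dots,\lambda_{d_{\cH}})$ with each $\lambda_i\in\{0,1,\dots,n\}$, there are at most $(n+1)^{d_{\cH}}$ such tuples, matching the first bound in~\eqref{eq:number_partitions}. Next, for~\eqref{eq:number_size}, I would identify $m_{\lambda}(\cA,\cH^n,\cB)$ with the number of columns of the representative matrix $\id_{\cA}\otimes U_{\lambda}\otimes \id_{\cB}$, which is $d_{\cA}\cdot |T_{\lambda,d_{\cH}}|\cdot d_{\cB}$. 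Applying the bound $|T_{\lambda,d_{\cH}}|\leq (n+1)^{d_{\cH}(d_{\cH}-1)/2}$ from~\eqref{eq:number_partitions} and the trivial inequality $d_{\cA}d_{\cB}\leq d_{\cA}^2 d_{\cB}^2$ then yields the desired bound.

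Finally, for~\eqref{eq: number_dimension}, I would apply the formula~\eqref{eq:dim_END vs multiplicites} to the decomposition of $\cA\otimes \cH^{\otimes n}\otimes \cB$ into $\mS_n$-isotypical components indexed by $\lambda \in \Par(d_{\cH},n)$, giving
\begin{equation*}
\dim \End^{\mS_n}\!\left(\cA\otimes \cH^{\otimes n}\otimes \cB\right) = \sum_{\lambda \vdash_{d_{\cH}} n} m_{\lambda}(\cA,\cH^n,\cB)^2 .
\end{equation*}
Bounding this sum by $|\Par(d_{\cH},n)|\cdot \max_{\lambda} m_{\lambda}(\cA,\cH^n,\cB)^2$ and plugging in the two bounds just established yields $(n+1)^{d_{\cH}}\cdot d_{\cA}^2 d_{\cB}^2 (n+1)^{d_{\cH}(d_{\cH}-1)} = d_{\cA}^2 d_{\cB}^2 (n+1)^{d_{\cH}^2}$, as required.

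No step here looks difficult; the only place that requires care is the identification of $m_{\lambda}(\cA,\cH^n,\cB)$ with the column count of $\id_{\cA}\otimes U_{\lambda}\otimes \id_{\cB}$. This follows because $\mS_n$ acts trivially on $\cA$ and $\cB$, so the representatives for these factors are the $d_{\cA}\times d_{\cA}$ and $d_{\cB}\times d_{\cB}$ identities (all copies of the trivial representation form a single isotypical component), and Lemma~\ref{lem:direct_groups_representative} applied to the direct product of the trivial group, $\mS_n$, and the trivial group then produces exactly the representative set~\eqref{eq:representative_set} with the claimed column counts.
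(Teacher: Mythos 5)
Your proof is correct and fills in exactly the bookkeeping the paper leaves implicit when it says Proposition~\ref{pro:bounded_contraints} is ``a consequence of Eq.~\eqref{eq:symmetric_representative_set} and~\eqref{eq:number_partitions}''; the route through the column count of $\id_{\cA}\otimes U_{\lambda}\otimes\id_{\cB}$, Lemma~\ref{lem:direct_groups_representative} with the trivial groups on $\cA,\cB$, and Eq.~\eqref{eq:dim_END vs multiplicites} is the intended one. One small wording caveat: for~\eqref{eq: number_dimension} you cannot literally plug in the stated bound~\eqref{eq:number_size} (which, as you note, already carries the slack factor $d_{\cA}^2 d_{\cB}^2 \geq d_{\cA}d_{\cB}$), since squaring it would give $d_{\cA}^4 d_{\cB}^4(n+1)^{d_{\cH}^2}$; the computation you actually perform correctly uses the exact identity $m_{\lambda}(\cA,\cH^n,\cB)=d_{\cA}d_{\cB}\,|T_{\lambda,d_{\cH}}|$, so just phrase that step as plugging in $m_{\lambda}=d_{\cA}d_{\cB}|T_{\lambda,d_{\cH}}|\le d_{\cA}d_{\cB}(n+1)^{d_{\cH}(d_{\cH}-1)/2}$ rather than ``the two bounds just established.''
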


\paragraph{A basis for the invariant subspace.} We can construct the canonical basis  of $\End^{\mathfrak{S}_n}\left(\cA \otimes \cH^{\otimes n} \otimes \cB \right)$  that consists of zero-one incidence matrices from the orbits of the group action of $\mS_n$ on the pairs $\left[d_{\cA} \times (d_{\cH})^n \times d_{\cB} \right]^2$(see~\cite{Klerk_07} or~\cite[Chapter 9]{Bachoc2012} for more information). In particular, let $i \in \left[d_{\cA} \times (d_{\cH})^n \times d_{\cB} \right] $ be the index of the standard basis of $\cA \otimes \cH^{\otimes n} \otimes \cB$. Then the orbit of the pair $(i,j) \in \left[d_{\cA} \times (d_{\cH})^n \times d_{\cB} \right]^2$ under the action of the group $\mS_n$ is given by
\begin{align*}
	O(i,j) = \{ (\pi(i), \pi(j) ): \pi \in \mathfrak{S}_n \},
\end{align*}   
where $\pi(i)$ is the index of the basis vector $\left( \id_{\cA} \otimes U_{\cH^n}(\pi)\otimes \id_{\cB}\right) \ket{i}$. The set $\left[d_{\cA} \times (d_{\cH})^n \times d_{\cB} \right]^2$ decomposes into orbits $O_{1},\dots,O_{m(\cA,\cH^n,\cB)}$ under the action of $\mathfrak{S}_n$, recall that $m(\cA,\cH^n,\cB)$ is the dimension of the invariant space $\End^{\mathfrak{S}_n}\left(\cA \otimes \cH^{\otimes n} \otimes \cB \right)$. Moreover, from Eq.~\eqref{eq: number_dimension}, one has $m(\cA,\cH^n,\cB) \leq d_{\cA}^2 d_{\cB}^2(n+1)^{d_\cH^2}$.
\par 

For each $r \in [m(\cA, \cH^n,\cB)]$, we construct a zero-one matrix $C_{r}$ of size $\left(d_{\cA} \times (d_{\cH})^n \times d_{\cB} \right)  \times \left(d_{\cA} \times (d_{\cH})^n \times d_{\cB} \right)$ given by
\begin{align}
	\label{eq:canonical_basis}
	(C_{r})_{ij} = 
	\begin{cases} 
		1 & \text{if }  (i,j) \in O_{r} \, , \\
		0 & \text{otherwise}.
	\end{cases}
\end{align}
The set $\{C_{1},\dots,C_{m(\cA,\cH^n,\cB)} \}$ forms a canonical basis of $\End^{\mathfrak{S}_n}\left(\cA \otimes \cH^{\otimes n} \otimes \cB \right)$~\cite[Chapter 9]{Bachoc2012}.
\par
Next, we will show how to utilize the symmetries in Program~\ref{eq:sdp_hierarchy} to simplify it. 
\par 
Let $\cH \coloneqq B\bar{B}$. For each $\cD \in \{A\bar{A} \otimes \cH^{\otimes n}, A \otimes \cH^{\otimes n}, A\bar{A} \otimes \cH^{\otimes n-1} \otimes B_{n} \}$, we set
\begin{align*}
	t(\cD) = \begin{cases}
		    n-1 \text{ \, \, if } \cD = A\bar{A} \otimes \cH^{\otimes n-1} \otimes B_{n}  \, ,\\
		    n \text{ \, \, \, \,  \, \,   otherwise. }
	\end{cases}
\end{align*}
 Let $\psi_{\cD}: \End^{\mS_{t(\cD)}} \left( \cD \right) \to \bigoplus_{ \lambda \in \Par(d_{\cH},t(\cD))} \CC^{m_{\lambda}(\cD) \times m_{\lambda}(\cD)}$ be the bijective linear map defined in Eq.~\eqref{eq:bijective_map}, where $d_{\cH} =  d_{B\bar{B}} = d_Bd_{\bar{B}}$. For any $X \in \End^{\mS_{t(\cD)}}\left(\cD \right)$  and  $\lambda \in \Par(d_{\cH},t(\cD))$, we write $\llbracket \psi_\cD(X) \rrbracket_{\lambda}$ for the block of $\psi_{\cD}(X)$ indexed by~$\lambda$. Let $m(\cD)$ denote the dimension of the invariant subspace $\End^{\mS_{t(\cD)}}\left(\cD \right)$, i.e., $m(\cD) \coloneqq  \dim  \End^{\mS_{t(\cD)}}\left(\cD \right)$,  and let $\{C_1(\cD),\dots,C_{m(\cD)}(\cD)\}$ denote the canonical basis of $\End^{\mS_{t(\cD)}}\left(\cD \right)$ defined in Eq.~\eqref{eq:canonical_basis}. Define $\cD_1 \coloneqq A\bar{A} \otimes \cH^{\otimes n}, \; \cD_2 \coloneqq A \otimes \cH^{\otimes n}$ and $\cD_3 \coloneqq  A\bar{A} \otimes \cH^{\otimes n-1} \otimes B_{n}$. Since $\rho_{A\bar{A}(B\bar{B})_{1}^{n}} \in \End^{\mS_n} \left(\cD_1 \right)$, we can write $\rho_{A\bar{A}(B\bar{B})_{1}^{n}} = \sum_{i=1}^{m(\cD_1)}x_i C_{i}(\cD_1)$ for some $x_i \in \CC, \, i  = 1,\dots,m(\cD_1)$. From Lemma~\ref{lem:invariant_matrices} and Proposition~\ref{pro:bijective_map}, we can construct a transformation $\Phi$, obtained by applying $\psi_{\cD_1}, \psi_{\cD_2}, \psi_{\cD_3}$ to the relevant matrices, which maps Program~\ref{eq:sdp_hierarchy} to an equivalent semidefinite program $\Phi(\SDP_{n}(\cN,M))$ as follows.
\begin{align}
\label{eq:pro:transform} \tag{$\Phi(\mathrm{SDP}_{n}(\cN,M))$}
\begin{split}
        & \mathrm{maximize: } \quad d_{\bar{A}}d_{B} \cdot \sum_{i=1}^{m(\cD_1)}x_i \cdot \tr \left[  \left(\cJ_{\bar{A}B_1}^{\cN}\otimes \Phi_{A\bar{B}_1} \right)C_{i}(\cD_1)_{A\bar{A}B_1\bar{B}_1}   \right] \\
		&\mathrm{s.t.}  \sum_{i=1}^{m(\cD_1)}x_i \cdot \left\llbracket \psi_{\cD_1}(C_i(\cD_1))\right\rrbracket_{\lambda} \succeq 0 \; , \forall  \lambda \in \mathrm{Par}(d_{B}d_{\bar{B}},n) \, , \\
		&\sum_{i=1}^{m(\cD_1)}x_i \cdot \tr \left[ C_i(\cD_1)\right] = 1 \, ,\\
		& d_{A} \cdot \sum_{i=1}^{m(\cD_1)}x_i \cdot \left\llbracket \psi_{\cD_2} \left( \tr_{\bar{A}}(C_i(\cD_1)) \right)  \right\rrbracket_{\lambda} \\
		&= \sum_{i=1}^{m(\cD_1)}x_i \cdot \left\llbracket  \psi_{\cD_2} \left( \id_A \otimes \tr_{A\bar{A}}(C_i(\cD_1)) \right)  \right\rrbracket_{\lambda} \, \forall  \lambda \in \mathrm{Par}(d_Bd_{\bar{B}},n) \, ,\\
		 & d_{B} \cdot \sum_{i=1}^{m(\cD_1)}x_i \cdot \left\llbracket \psi_{\cD_3} \left( \tr_{\bar{B}_n}(C_i(\cD_1)) \right)  \right\rrbracket_{\lambda} \\
	  &= \sum_{i=1}^{m(\cD_1)}x_i \cdot \left\llbracket  \psi_{\cD_3} \left( \tr_{B_n\bar{B}_n}(C_i(\cD_1)) \otimes \id_{B} \right)  \right\rrbracket_{\lambda} \, \forall  \lambda \in \mathrm{Par}(d_Bd_{\bar{B}},n-1) \, , \\
	 & x_1,\dots,x_{m(\cD_1)} \in \CC \, .
   \end{split}
\end{align}

\hfill \break

\begin{theorem}
	\label{thm:convert_program}
	For $M,n \in \NN_{\geq 1}$. Let $\cN_{\bar{A}\to B}$ be a quantum channel. The semidefinite program $ \Phi (\SDP_{n}(\cN,M))$ has $\bigO\left(d_{A}^{2}d_{\bar{A}}^{2} (n+1)^{d_{B}^2d_{\bar{B}}^2}\right)$ variables and $\bigO \left((n+1)^{d_{B}d_{\bar{B}}} \right)$ positive semidefinite constraints involving matrices of size at most $\bigO \left(d_{A}d_{\bar{A}}d_{B}(n+1)^{d_{B}d_{\bar{B}}(d_{B}d_{\bar{B}}-1)/2} \right)$. As a consequence, the size of the Program $ \Phi (\SDP_{n}(\cN,M))$ is $\poly(d_{\bar{A}},n)$, for fixed $d_{A} = d_{\bar{B}} = M$  and $d_{B}$.  
\end{theorem}
\begin{proof}
    The proof of the theorem follows directly by using the results from Proposition~\ref{pro:bounded_contraints}. Firstly, the number of variables in Program~\ref{eq:pro:transform} is $m(\cD_1) = \dim \End^{\mS_n} \left( A\bar{A} \otimes (B\bar{B})^{\otimes n} \right)$. Thus, from Eq.~\eqref{eq: number_dimension}, it is bounded by $\bigO \left( d_{A}^{2}d_{\bar{A}}^{2} (n+1)^{d_{B}^2d_{\bar{B}}^2}\right)$. Since $|\Par(d_Bd_{\bar{B}},n)| \leq (n+1)^{d_{B}^2d_{\bar{B}}^2}$ according to Eq.~\eqref{eq:number_partion}, so the number of positive semidefinite (PSD) constraints is bounded by $\bigO \left( (n+1)^{d_{B}d_{\bar{B}}}\right)$. Finally, for every $q \in [3]$ and $\lambda \in \Par(d_Bd_{\bar{B}},n)$, by Eq.~\eqref{eq:number_size}, we have $m_{\lambda}(\cD_q) \leq d_{A}d_{\bar{A}}d_{B}(n+1)^{d_{B}d_{\bar{B}}(d_{B}d_{\bar{B}}-1)/2}$. This implies that the size of matrices involved in PSD constraints is at most $\bigO \left(d_{A}d_{\bar{A}}d_{B}(n+1)^{d_{B}d_{\bar{B}}(d_{B}d_{\bar{B}}-1)/2} \right)$. Therefore, when $d_{A} = d_{\bar{B}}= M$  and $d_{B}$ are fixed, the size of Program $ \Phi (\SDP_{n}(\cN,M))$ is bounded by $\poly(d_{\bar{A}},n)$, which completes the proof. 
\end{proof}


From Theorem~\ref{thm:convert_program}, we have that the size of the Program~\ref{eq:pro:transform} is bounded by a polynomial in $n$ and $d_{\bar{A}}$, therefore it can be solved in $\poly(d_{\bar{A}},n)$ time. However, directly applying the transformation $\Phi$ that maps Program~\ref{eq:sdp_hierarchy} into the equivalent Program~\ref{eq:pro:transform} requires exponential computations in $n$. Next, we will provide an efficient method to accomplish this task. In particular, let  $\{ C_1(\cD),\dots,C_{m(\cD)}(\cD)\}$ be the canonical basis of $\End^{\mS_{t(\cD)}}\left(\cD \right)$ defined in Eq.~\eqref{eq:canonical_basis}. Given $z_1,\dots,z_{m(\cD)} \in \CC$, we show that for any $\lambda \in \Par(d_{\cH},t(\cD))$, the block $ \left\llbracket \psi_{\cD}\left(\sum_{i=1}^{m(\cD)}z_i C_i(\cD) \right) \right \rrbracket_{\lambda}$ can be computed in $\poly(d_{\bar{A}},n)$ time, where $d_{\cH}$ is fixed. We then use this result to apply it to our specific spaces to reach the desired conclusion.  
\par 

Let $\{ C_1^{\cH}, \dots, C_{m(\cH)}^{\cH} \}$ denote the canonical basis of $\End^{\mS_n}\left( \cH^{\otimes n}\right)$ as defined in Eq.~\eqref{eq:canonical_basis}, where $m(\cH) \coloneqq \dim \End^{\mS_n}\left( \cH^{\otimes n}\right) $. The canonical basis of the space $\End^{\mS_n} \left( \cA \otimes \cH^{\otimes n} \otimes \cB \right)$ can be explicitly described as follows.
\begin{align}
	\{\ketbra{i_\cA}{j_\cA} \otimes C_t^{\cH} \otimes \ketbra{i_\cB}{j_\cB} \}_{\substack{i_\cA,j_\cA \in [d_\cA] \\ i_\cB,j_\cB \in [d_\cB]\\t \in [m(\cH)]}} \; .
\end{align}   
Let $\psi_{\cA,\cH,\cB}: \End^{\mS_n} \left( \cA \otimes \cH^{\otimes n} \otimes \cB \right) \to \bigoplus_{\lambda \in \Par(d_\cH,n)} \CC^{m_\lambda(\cA, \cH^n,\cB) \times m_\lambda(\cA, \cH^n,\cB)}$ be a bijective linear map defined in Eq.~\eqref{eq:bijective_map}. For any $\lambda \in \Par(d_\cH,n)$, we have

\begin{align}
\label{eq:compute_bolck}
	\begin{split}
	    \left \llbracket \psi_{\cA,\cH,\cB} \left(\ketbra{i_\cA}{j_\cA} \otimes C_t^{\cH} \otimes \ketbra{i_\cB}{j_\cB} \right) \right \rrbracket_{\lambda} &\coloneqq \left( \id_\cA \otimes U_{\lambda}^{T} \otimes \id_{\cB} \right)\left(\ketbra{i_\cA}{j_\cA} \otimes C_t^{\cH} \otimes \ketbra{i_\cB}{j_\cB} \right)  \left( \id_\cA \otimes U_\lambda \otimes \id_{\cB} \right) \\
	&= \ketbra{i_\cA}{j_\cA} \otimes U_{\lambda}^TC_t^{\cH}U_{\lambda} \otimes \ketbra{i_\cB}{j_\cB}. 
	\end{split}
\end{align}
We are going to show that $U_{\lambda}^TC_t^{\cH}U_{\lambda}$ can be computed in $\poly(n)$ time, for any $\lambda \in \Par(d_{\cH},n)$ and $t \in [m(\cH)]$. More generally, given $z_1,\dots, z_{m(\cH)} \in \CC$, for any $\lambda \in \Par(d_\cH,n)$, we will show that $\sum_{i=1}^{m(\cH)}z_i U_{\lambda}^T C_i^{\cH} U_{\lambda}$ can be computed in $\poly(n)$ time. In particular, from Eqs.~\eqref{eq:symmetric_representative_set} and~\eqref{eq:number_partitions}, we just need to show that we can compute $\sum_{i=1}^{m(\cH)}z_iu_{\tau}^{T}C_i^{\cH}u_{\gamma}$ in $\poly(n)$ time for any $\tau,\gamma \in T_{d_\cH,\lambda}$. We note that $u_{\tau}$, $u_{\gamma}$ and $C_{r}^{\cH}$ all have exponential size in $n$. As a direct consequence from~\cite[Lemma 2]{litjens2017semidefinite} (see also in~\cite[Lemma 4.5]{fawzi2022hierarchy}), we have the following lemma.    

\begin{lemma}
	\label{lem:computing_entries_block}
	Let $\lambda \in \Par(d_{\cH},n)$, and  $\tau,\gamma \in T_{\lambda,d_{\cH}}$. Given $z_1,\dots,z_{m(\cH)} \in \CC$. Then $\sum_{r=1}^{m(\cH)}z_r u_{\tau}^{T}C_{r}^{\cH}u_{\gamma}$ can be computed in polynomial time in $n$, for fixed $d_{\cH}$. As a direct consequence, for any $i_\cA,j_\cA \in [d_\cA]; i_\cB,j_\cB \in [d_\cB]; t \in [m(\cH)]$, we can determine the matrix $\left \llbracket \psi_{\cA,\cH,\cB} \left(\ketbra{i_\cA}{j_\cA} \otimes C_t^{\cH} \otimes \ketbra{i_\cB}{j_\cB} \right) \right \rrbracket_{\lambda}$ in $\poly(d_{\cA},d_{\cB},n)$ time for all $\lambda \in \Par(d_\cH,n)$.     
\end{lemma}
\begin{proof}
	The proof of Lemma~\ref{lem:computing_entries_block} is based on~\cite{litjens2017semidefinite} and~\cite[pp. 30–31]{Polak_thesis} (see also~\cite{fawzi2022hierarchy}), and can be found in Appendix~\ref{appendix:computing_entries}.
\end{proof}


Consider the Program~\ref{eq:pro:transform}. For each $\cD \in \{A\bar{A} \otimes \cH^{\otimes n}, A \otimes \cH^{\otimes n }, A\bar{A} \otimes \cH^{\otimes n-1} \otimes B_{n} \}$ and any $X \in \End^{\mS_{t(\cD)}}(\cD)$, if all coordinates in the expansion of $X$ in the canonical basis $\End^{\mS_{t(\cD)}}(\cD)$ are given, we can compute the block $\llbracket \psi_\cD(X) \rrbracket_{\lambda}$ in $\poly(d_{\bar{A}}, n)$ time for all $\lambda \in \Par(d_{\cH}, t(\cD))$, as pointed out in Lemma~\ref{lem:computing_entries_block}. With this result, we demonstrate in the following theorem that the transformation $\Phi$ can be implemented in polynomial time with respect to $d_{\bar{A}}$ and $n$. As a conclusion, Program~\ref{eq:sdp_hierarchy} can be solved in $\poly(d_{\bar{A}},n)$ time.
\par
\begin{theorem}
	\label{thm:transformation}
	For $M,n\in \NN_{\ge 1}$. Let $\cN_{\bar{A}\to B}$ be a quantum channel. The transformation $\Phi$ that maps Program~\ref{eq:sdp_hierarchy} to Program~\ref{eq:pro:transform} can be done in $\poly(d_{\bar{A}}, n)$ time for fixed $d_{\cH} = d_{B}d_{\bar{B}}$.
\end{theorem}
Before proving Theorem~\ref{thm:transformation}, we recall some properties of the orbits generated by the natural action of the symmetric group, as presented in~\cite{gijswijt2009block}.

\textbf{Enumerating all orbits}. Let $\{C_1^{\cH},\dots,C_{m(\cH)}^{\cH} \}$ be the canonical basis of $\End^{\mS_n} \left(\cH^{\otimes n} \right)$ defined in Eq.~\eqref{eq:canonical_basis}. For each $r=  \{ 1,\dots,m(\cH) \}$, let $O_{r}^{\cH}$ be an orbit that corresponds to the matrix $C_{r}^{\cH}$,  we need to compute a representative element of $O_{r}^{\cH}$. In order to do so, we define a matrix $E^{(i,j)} \in \mathbb{Z}_{\geq 0}^{d_\cH \times d_\cH}$
\begin{align}
	\label{eq:number_pair}
	(E^{(i,j)})_{a,b} \coloneqq \left|\{v \in [n]: i_v = a,j_v = b \}\right| \; , \; \forall a,b \in[d_{\cH}] \, .
\end{align}  

Based on the construction given in  Eq.~\eqref{eq:number_pair}, for any two pairs $(i,j), (i',j')$ belonging to the sets $[d_\cH]^n \times [d_\cH]^n$, it holds that $(i',j')=(\pi(i), \pi(j))$ for some permutation $\pi \in \mathfrak{S}_n$ if and only if $E^{(i,j)} = E^{(i',j')}$. Therefore, there exists a direct one-to-one mapping between the orbits $\cbr{O_{r}^{\cH}}_{r\in[m(\cH)]}$ and matrices $E \in \mathbb{Z}_{\geq 0}^{d_\cH \times d_\cH}$ where the sum of all elements in $E$ is equal to $n$, i.e., $\sum_{a,b}E_{a,b} = n$. Thus, we can efficiently find a representative element for each $O_r^{\cH}$ in polynomial time by listing all non-negative integer solutions of the equation $\sum_{a,b \in [d_\cH]} E_{a,b} = n$. Moreover, let \( E^{i,j} \) be the matrix corresponding to the orbit \( O_r^{\cH} \). The size of \( O_r^{\cH} \) is given by
\begin{equation}
	\label{eq: size_of_orbit}
	|O_r^{\cH}| = \frac{n!}{\prod_{(a,b) \in [d_{\cH}] \times [d_{\cH}]} (E^{i,j})_{a,b}!} \, .
\end{equation}

\par 
For convenience, we will recall some notations. For each $\cD \in \{A\bar{A} \otimes \cH^{\otimes n}, A \otimes \cH^{\otimes n}, A\bar{A} \otimes \cH^{\otimes n-1} \otimes B_{n} \}$, we set $t(\cD) = n-1$ if $\cD = A\bar{A} \otimes \cH^{\otimes n-1} \otimes B_{n}$. Let $\{C_1(\cD),\dots,C_{m(\cD)}(\cD)\}$ denote the canonical basis of $\End^{\mS_{t(\cD)}}\left(\cD \right)$ defined in Eq.~\eqref{eq:canonical_basis}. Following the notation in the proof of Theorem~\ref{thm:convert_program}, let $\psi_{\cD}: \End^{\mS_{t(\cD)}} \left(\cD \right) \to \bigoplus_{ \lambda \in \Par(d_{\cH},t(\cD))} \CC^{m_{\lambda}(\cD) \times m_{\lambda}(\cD)}$ be the bijective linear map defined in Eq.~\eqref{eq:bijective_map}, where $m_{\lambda}(\cD)$ is the size of the block indexed by $\lambda$. Recall that, for any  $\lambda \in \Par(d_{\cH}, t(\cD))$, one has $ m_{\lambda}(\cD) \leq \poly(d_{\bar{A}},n)$ by Theorem~\ref{thm:convert_program}.

\begin{proof}[Proof of the Theorem~\ref{thm:transformation}]
    For $\cD_1 \coloneqq A\bar{A}\otimes \cH^{\otimes n}$ and note that $d_{\cH} = d_{B}d_{\bar{B}}$ is fixed. Let $\{C_1^{\cH},\dots,C_{m(\cH)}^{\cH}\}$ denote the canonical basis of $\End^{\mS_n} \left( \cH^{\otimes n}\right)$ defined in Eq.~\eqref{eq:canonical_basis} and let $\{O_1^{\cH},\dots,O_{m(\cH)}^{\cH}\}$ denote the set of orbits of pairs that corresponding to $\{C_1^{\cH},\dots,C_{m(\cH)}^{\cH}\}$. The following set of matrices
	\begin{align*}
		\mathcal{B}_1 = \{\ketbra{i}{j} \otimes \ketbra{x}{y} \otimes C_{t}^{\cH}\}_{\substack{i,j \in [d_A]\\ x,y \in [d_{\bar{A}}]\\ t \in [m(\cH)]}} 
	\end{align*}   
	is the canonical basis of $\End^{\mS_n} \left(\cD_1 \right)$. Since $\rho_{A\bar{A}(B\bar{B})_{1}^{n}} \in \End^{\mS_n}\left( \cD_1 \right)$, we can express it in terms of the canonical basis $\cB_1$ as
	 \begin{align*}
	 	\rho_{A\bar{A}(B\bar{B})_{1}^{n}} = \sum_{\substack{i,j \in [d_A]\\ x,y \in [d_{\bar{A}}]\\ t \in [m(\cH)]}}v_{i,j,x,y,t} \ketbra{i}{j} \otimes \ketbra{x}{y} \otimes C_{t}^{\cH} \, ,
	 \end{align*}  
	 where $v_{i,j,x,y,t} \in \CC$ for all $i,j \in [d_A]; x,y \in [d_{\bar{A}}]; t \in [m(\cH)]$, note that $m(\cH)$ is bounded by $\poly(n)$. We first write the Program $\Phi(\SDP_{n}(\cN,M)$ in terms of variables $v_{i,j,x,y,t}$ for $i,j \in [d_A]; x,y \in [d_{\bar{A}}]; t \in [m(\cH)]$.

	 
	  For simplicity of notation, for any $i,j \in [d_A], x,y \in [d_{\bar{A}}]$, and $t \in [m(\cH)]$, we denote $Z(i,j,x,y,t) = \ketbra{i}{j} \otimes \ketbra{x}{y} \otimes C_{t}^{\cH}$. Define  $\cD_2 \coloneqq A \otimes \cH^{\otimes n}$ and $\cD_3 \coloneqq  A\bar{A} \otimes \cH^{\otimes n-1} \otimes B_{n}$. 
	  
	  Program~\ref{eq:pro:transform} can be rewritten as
	\begin{IEEEeqnarray*}{rCl"r}
		& \, &  \mathrm{maximize } \quad d_{\bar{A}}d_{B} \cdot \sum_{\substack{i,j \in [d_A]\\ x,y \in [d_{\bar{A}}]\\ t \in [m(\cH)]}}v_{i,j,x,y,t} \cdot \tr \left[  \left(\cJ_{\bar{A}B_1}^{\cN}\otimes \Phi_{A\bar{B}_1} \right)  \tr_{(B\bar{B})_{2}^{n}}\left( Z(i,j,x,y,t)\right)   \right] &\\
		\mathrm{ s.t.}&  &\sum_{\substack{i,j \in [d_A]\\ x,y \in [d_{\bar{A}}]\\ t \in [m(\cH)]}}v_{i,j,x,y,t} \left\llbracket \psi_{\cD_1} \left( Z(i,j,x,y,t)\right)\right\rrbracket_{\lambda} \succeq 0 \; , \forall  \lambda \in \Par(d_{B}d_{\bar{B}},n)\, , \;  \\
		& & \sum_{\substack{i,j \in [d_A]\\ x,y \in [d_{\bar{A}}]\\ t \in [m(\cH)]}}v_{i,j,x,y,t} \cdot \tr \left(Z(i,j,x,y,t) \right)  = 1 \, , \\
		& & d_{A} \cdot \sum_{\substack{i,j \in [d_A]\\ x,y \in [d_{\bar{A}}]\\ t \in [m(\cH)]}}v_{i,j,x,y,t} \left\llbracket \psi_{\cD_2} \left( \tr_{\bar{A}} \left(Z(i,j,x,y,t) \right) \right)  \right\rrbracket_{\lambda} \\ 
		& &= \sum_{\substack{i,j \in [d_A]\\ x,y \in [d_{\bar{A}}]\\ t \in [m(\cH)]}}v_{i,j,x,y,t}  \left\llbracket  \psi_{\cD_2} \left( \id_A \otimes \tr_{A\bar{A}} \left( Z(i,j,x,y,t) \right) \right)  \right\rrbracket_{\lambda} \, \forall \lambda \in \Par(d_Bd_{\bar{B}},n) \, , \\
		& & d_{B} \cdot \sum_{\substack{i,j \in [d_A]\\ x,y \in [d_{\bar{A}}]\\ t \in [m(\cH)]}}v_{i,j,x,y,t} \left\llbracket \psi_{\cD_3} \left( \tr_{\bar{B}_n}\left( Z(i,j,x,y,t)\right) \right)  \right\rrbracket_{\lambda} \\ 
		& &= \sum_{\substack{i,j \in [d_A]\\ x,y \in [d_{\bar{A}}]\\ t \in [m(\cH)]}}v_{i,j,x,y,t}  \left\llbracket  \psi_{\cD_3} \left( \tr_{B_n\bar{B}_n}\left( Z(i,j,x,y,t)\right) \otimes \id_{B} \right)  \right\rrbracket_{\lambda} \, \forall \lambda \in \Par(d_Bd_{\bar{B}},n-1) \, , \\ \\
		& & v_{i,j,x,y,t} \in \CC \; \text{ for all } i,j \in [d_A]; x,y \in[d_{\bar{A}}], \text{ and } t\in [m(\cH)]  \, \, .
	\end{IEEEeqnarray*}

We will demonstrate that the objective function and constraints in Program~\ref{eq:pro:transform} can be determined explicitly in $\poly(d_{\bar{A}},n)$ time.
\par 

Firstly, for any $t \in [m(\cH)]$, let $(a,b)$ be a representative element of $O_{t}^{\cH}$. We can compute the trace of each matrix in the canonical basis $\cB_1$ of $\End^{\mS_n}(\cD_1)$ efficiently. Indeed,
	 \begin{align*}
	 	\tr \left( \ketbra{i}{j} \otimes \ketbra{x}{y} \otimes C_{t}^{\cH} \right) = \begin{cases}
	 	    |O_{t}^{\cH}|, & \text{if $i=j, x=y$ and $a=b$} \, ,\\
	 		0, & \text{otherwise.}
	 	\end{cases}
	 \end{align*} 
Therefore, we can compute $\tr (Z(i,j,x,y,t))$ efficiently for all $i,j \in [d_A]; x,y \in [d_{\bar{A}}]; t \in [m(\cH)]$. In addition, from Lemma~\ref{lem:computing_entries_block}, we can determine $\left\llbracket \psi_{\cD_1} \left( Z(i,j,x,y,t)\right)\right\rrbracket_{\lambda}$ in $\poly(d_{\bar{A}},n)$ time for all $\lambda \in \Par(d_{B}d_{\bar{B}},n)$. 

\par 
Next, for each $q \in [3]$, given $z_1, \dots, z_{m(\cD_q)} \in \CC$ and consider a matrix $X = \sum_{i=1}^{m(\cD_q)}z_iC_i(\cD_q)$. From Lemma~\ref{lem:computing_entries_block}, we can determine all blocks in the diagonal matrix $\psi_{\cD_q}(X)$ in $\poly(d_{\bar{A}},n)$ time. Using this observation, to show $\left\llbracket \psi_{\cD_2} \left( \tr_{\bar{A}} \left(Z(i,j,x,y,t) \right) \right)  \right\rrbracket_{\lambda},  \left\llbracket  \psi_{\cD_2} \left( \id_A \otimes \tr_{A\bar{A}} \left( Z(i,j,x,y,t) \right) \right)  \right\rrbracket_{\lambda}$ for $\lambda \in \Par(d_{B}d_{\bar{B}},n)$ and $\left\llbracket \psi_{\cD_3} \left( \tr_{\bar{B}_n}\left( Z(i,j,x,y,t)\right) \right)  \right\rrbracket_{\lambda}, \left\llbracket  \psi_{\cD_3} \left( \tr_{B_n\bar{B}_n}\left( Z(i,j,x,y,t)\right) \otimes \id_{B} \right)  \right\rrbracket_{\lambda}$ for $\lambda \in \Par(d_{B}d_{\bar{B}},n-1)$ can be computed in $\poly(d_{\bar{A}},n)$ time. It suffices to show  that, for all $i,j \in [d_A] \, ; x,y  \in [d_{\bar{A}}] \, ; t \in [m(\cH)]$, expanding $\tr_{A} \left( \ketbra{i}{j} \otimes \ketbra{x}{y} \otimes C_{t}^{\cH} \right)$, $\id_A \otimes \tr_{A\bar{A}}\left( \ketbra{i}{j} \otimes \ketbra{x}{y} \otimes C_{t}^{\cH} \right)$ in the canonical basis of $\End^{\mS_n}(\cD_2)$ and $\tr_{\bar{B}_n} \left( \ketbra{i}{j} \otimes \ketbra{x}{y} \otimes C_{t}^{\cH} \right)$; $\tr_{B_n\bar{B}_n}\left( \ketbra{i}{j} \otimes \ketbra{x}{y} \otimes C_{t}^{\cH} \right)$ in the canonical basis $\End^{\mS_{n-1}}(\cD_3)$ can be done in $\poly(d_{\bar{A}},n)$ time.
\par 
 We have    
	\begin{align*}
		\tr_{\bar{A}}\left(\ketbra{i}{j} \otimes \ketbra{x}{y} \otimes C_{t}^{\cH} \right) = \begin{cases}
			\ketbra{i}{j} \otimes C_{t}^{\cH}, & \text{if $x=y$ }\\
			0, & \text{otherwise.}
		\end{cases}
	\end{align*} 
	And, 
	\begin{align*}
		\tr_{A\bar{A}} \left(\ketbra{i}{j} \otimes \ketbra{x}{y} \otimes C_{t}^{\cH} \right) = \begin{cases}
			C_{t}^{\cH}, & \text{if $i=j$ and $x=y$}\\
			0, & \text{otherwise.}
		\end{cases}  
	\end{align*}
On the other hand, the canonical basis of the space $\End^{\mS_n}\left( \cD_2 \right)$ is described as follows,
\begin{align*}
	\mathcal{B}_2 = \{\ketbra{i}{j} \otimes C_{t}^{\cH}\}_{\substack{i,j \in [d_A] \\ t \in [m(\cH)]}} \; .
\end{align*}   
Therefore, for any $i,j \in [d_A]$; $x,y \in [d_{\bar{A}}]$, and $t \in [m(\cH)]$, we can efficiently compute coordinates of the matrices $\tr_{\bar{A}} \left(\ketbra{i}{j} \otimes \ketbra{x}{y} \otimes C_{t}^{\cH} \right)$ and $\id_{A} \otimes \tr_{A\bar{A}}\left(\ketbra{i}{j} \otimes \ketbra{x}{y} \otimes C_{t}^{\cH} \right)$ over the canonical basis $\cB_2$.

Let $\ell(\cH) \coloneqq \dim \End^{\mS_{n-1}}\left(\cH^{\otimes n-1} \right)$. Note that, from Eq.~\eqref{eq: number_dimension}, $\ell(\cH)$ is bounded by $\poly(n)$. Consider the invariant subspace $\End^{\mS_{n-1}} \left(\cD_3 \right)$, let $\{K_1^{\cH},\dots, K_{\ell(\cH)}^{\cH} \}$ be the canonical basis of the space $\End^{\mS_{n-1}}\left(\cH^{\otimes n-1} \right)$ defined in Eq.~\eqref{eq:canonical_basis}. The following set of matrices
	\begin{align*}
		\mathcal{B}_3 = \{\ketbra{i}{j} \otimes \ketbra{x}{y} \otimes K_{t}^{\cH} \otimes \ketbra{u}{v} \}_{\substack{i,j \in [d_A] \\x,y \in [d_{\bar{A}}] \\ t \in [\ell(\cH)] \\ u,v \in [d_{B_n}]}} 
	\end{align*}   
	is the canonical basis of the space $\End^{\mS_{n-1}} \left(\cD_3 \right)$.

	  We will show that all coordinates of $\tr_{\bar{B}_n} \left(\ketbra{i}{j} \otimes \ketbra{x}{y} \otimes C_{t}^{\cH} \right)$ over $\cB_3$ can be computed in $\poly(d_{\bar{A}},n)$ time for any $i,j \in [d_A] ; x,y \in [d_{\bar{A}}]$, and $t \in [m(\cH)]$. Firstly, for any $t \in [m(\cH)]$, we can write
	\begin{align*}
		C_{t}^{\cH} = \sum_{(a,b) \in O_t^{\cH}} \ketbra{a_1}{b_1} \otimes \ketbra{a_2}{b_2} \otimes \dots \otimes \ketbra{a_n}{b_n} \, .
	\end{align*}
	Let $(a,b) \in [d_{B\bar{B}}]^n \times [d_{B\bar{B}}]^n$ be a representative element of $O_t^{\cH}$ following the construction in Eq. \eqref{eq:number_pair}. We denote by $S$ the set of all distinct pairs in  $ \{ (a_1,b_1),(a_2,b_2),\dots,(a_n,b_n) \}$, i.e., $S = \{(a_1,b_2) \cup (a_2,b_2) \cup \dots \cup (a_n,b_n) \}$. Under the action of $\mS_{n-1}$, the set $O_{t}^{\cH}$ can be partitioned into $|S|$ sets $\{O_{(c,d)}^{t}\}_{(c,d) \in S}$, where $O_{(c,d)}^{t}$ is defined as $\{(u,v) \in O_{t}^{\cH}: u_n = c, v_n = d \}$, noting that the size of $O_{(c,d)}^{t}$ can be computed in $\poly(n)$ time. For any $c \in [d_{B \bar{B}}]$, using the natural identification between $[d_{B\bar{B}}]$ and $[d_B] \times [d_{\bar{B}}]$ ($[d_{B\bar{B}}] \cong [d_B] \times [d_{\bar{B}}]$), and write $c = (c^{B},c^{\bar{B}}) \in [d_B] \times [d_{\bar{B}}]$. We have
	\begin{align*}
		C_{t}^{\cH} &= \sum_{(c,d) \in S} \sum_{(u,v) \in O_{(c,d)}^t} \ketbra{u_1}{v_1} \otimes \dots \otimes \ketbra{u_n = c}{v_n = d} \\
		&= \sum_{(c,d) \in S} \sum_{(u,v) \in O_{(c,d)}^t} \ketbra{u_1}{v_1} \otimes \dots \otimes \ketbra{c^{B_n}}{d^{B_n}}\otimes\ketbra{c^{\bar{B}_n}}{d^{\bar{B}_n}} \, . 
	\end{align*}  
 Therefore,
	\begin{align*}
		&\tr_{\bar{B}_n}(\ketbra{i}{j} \otimes \ketbra{x}{y} \otimes C_{t}^{\cH})\\
		&= \tr_{\bar{B}_n} \left( \ketbra{i}{j} \otimes \ketbra{x}{y} \otimes \sum_{(c,d) \in S} \sum_{(u,v) \in O_{(c,d)}^t} \ketbra{u_1}{v_1} \otimes \dots \otimes \ketbra{c^{B_n}}{d^{B_n}}\otimes\ketbra{c^{\bar{B}_n}}{d^{\bar{B}_n}} \right) \\
		&= \sum_{\substack{(c,d) \in S \\ c^{\bar{B}_n}=d^{\bar{B}_n}}} \sum_{(u,v) \in O_{(c,d)}^t} \ketbra{i}{j} \otimes \ketbra{x}{y} \otimes \ketbra{u_1}{v_1} \otimes \dots \otimes \ketbra{u_{n-1}}{v_{n-1}}\otimes \ketbra{c^{B_n}}{d^{B_n}}
	\end{align*}
        In addition, we can determine an index $t' \in [\ell(\cH)]$ such that 
	\begin{align*}
		K_{t'} = \sum_{\substack{(u,v) \in O_{(c,d)}^t \\ c^{\bar{B}_n}=d^{\bar{B}_n}}} \ketbra{u_1}{v_1} \otimes \dots \otimes \ketbra{u_{n-1}}{v_{n-1}} \, 
	\end{align*} 
	in $\poly(n)$ time by considering all representative elements corresponding to $K_1^{\cH},\dots, K_{\ell(\cH)}^{\cH} $. Thus, the coordinates of $\tr_{\bar{B}_n}\left( \ketbra{i}{j} \otimes \ketbra{x}{y} \otimes C_{t}^{\cH} \right)$ over the canonical basis $\cB_3$ can be computed in $\poly(n)$ time. The same method can be used for $\tr_{B_n\bar{B}_n} \left( \ketbra{i}{j} \otimes \ketbra{x}{y} \otimes C_{t}^{\cH} \right) \otimes \id_{B}$.

Finally, we will show that the objective function can be explicitly described by using variables $v_{i,j,x,y,t}$ for $i,j \in [d_A] \, ; x,y \in [d_{\bar{A}}]$, and $t \in [m(\cH)]$, in $\poly(d_{\bar{A}},n)$ time. In particular, the expression $\rho_{A\bar{A}B_1\bar{B}_1}$ can be described using variables $v_{i,j,x,y,t}$ for $i,j \in [d_A] \, ; x,y \in [d_{\bar{A}}]$, and $t \in [m(\cH)]$, in $\poly(d_{\bar{A}},n)$ time. Similarly to the previous argument, we need to show for any $i,j \in [d_A] \, ; x,y \in [d_{\bar{A}}]$, and $t \in [m(\cH)]$, the value $\tr_{(B\bar{B})_2^n}\left( \ketbra{i}{j} \otimes \ketbra{x}{y} \otimes C_{t}^{\cH} \right)$ can be computed efficiently.

Firstly, recall that for $x, y \in [m]^n$, their Hamming distance $H(x,y)$ is defined as $H(x,y) \coloneqq |\{i\in [n]: x_i \neq y_i \}|$. Let $(a,b)$ be a representative element of $O_{t}^{\cH}$. Let $C_{0}(t)$ be the size of $O_{t}^{\cH}$ and  $C_1(t)$ be the size of set $\{(u,v) \in O_t^{\cH}: u_1 = a_1,v_1 = b_1\}$. We have
\begin{align*}
	\tr_{(B\bar{B})_{2}^{n}} \left(\ketbra{i}{j} \otimes \ketbra{x}{y} \otimes C_{t}^{\cH} \right) &= \tr_{(B\bar{B})_{2}^{n}} \left(  \sum_{(c,d) \in O_t^{\cH}} \ketbra{i}{j} \otimes \ketbra{x}{y} \otimes \ketbra{c_1}{d_1} \otimes \ketbra{c_2}{d_2} \otimes \dots \otimes \ketbra{c_n}{d_n} \right) \\
	& = \begin{cases}
		C_0(t)\left( \ketbra{i}{j} \otimes \ketbra{x}{y} \otimes \ketbra{a_1}{a_1} \right)   \text{ if } H(a,b) = 0 \, ,\\
		C_1(t) \left( \ketbra{i}{j} \otimes \ketbra{x}{y} \otimes \ketbra{a_1}{b_1} \right) \text{ if } H(a,b) = 1 \text{ and } a_1 \neq b_1 \, ,\\
		0 \; \text{ otherwise. } 
	\end{cases}
\end{align*}         
Here, we used the fact that the Hamming distance is invariant under the action of the symmetric group; i.e., $H(x, y) = H(\pi \cdot x, \pi \cdot y)$ for all $\pi \in \mathfrak{S}_n$, where $\pi \cdot (x_1, \dots, x_n) = (x_{\pi^{-1}(1)}, \dots, x_{\pi^{-1}(n)})$. In addition, following Eq.~\eqref{eq: size_of_orbit}, $C_0(t)$ and $C_1(t)$ can be determined in $\poly(n)$ time, which completes the proof.
\end{proof}

\paragraph*{Acknowledgements.} We would like to thank Sven Polak for his useful comments and help in the proof of Lemma~\ref{lem:direct_groups_representative}. HT would like to thank Omar Fawzi and Mario Berta for the useful discussions.
 



\newpage
\phantomsection
\addcontentsline{toc}{section}{Bibliography}
\bibliographystyle{alphaurl}
\bibliography{bibliofile}

@book{rep_1977,
 place={Springer}, 
 title={Linear Representations of Finite Groups},
 publisher={Cambridge University Press}, 
 author={Jean-Pierre Serre}, 
 year={1977}}

@book{Bachoc2012,
  TITLE = {{Handbook on Semidefinite, Conic and Polynomial Optimization}},
  AUTHOR = {Lasserre, Jean-Bernard and Miguel, Anjos F.},
  URL = {https://hal.archives-ouvertes.fr/hal-00697757},
  HAL_LOCAL_REFERENCE = {Rapport LAAS n{\textdegree} 12238},
  PUBLISHER = {{Springer}},
  SERIES = {International Series in Operations Research \& Management Science},
  PAGES = {968},
  YEAR = {2011},
  MONTH = Nov,
  KEYWORDS = {Optimization ; Semidefinite programming ; Conic optimization ; Polynomial optimization},
  HAL_ID = {hal-00697757},
  HAL_VERSION = {v1},
}

@article{Klerk_07,
author = {Klerk, Etienne and Pasechnik, Dmitrii and Schrijver, Alexander},
year = {2007},
month = {March},
pages = {613-624},
title = {Reduction of symmetric semidefinite programs using the regular \*-representation},
volume = {109},
journal = {Math. Program.},
doi = {10.1007/s10107-006-0039-7}
}

@book{Fulton1991,
  title={Representation theory: a first course},
  author={Fulton, William and Harris, Joe},
  volume={129},
  year={2013},
  publisher={Springer Science \& Business Media}
}

@phdthesis{Polak_thesis,
	author    = {Sven Carel Polak},
	title     = {New methods in coding theory : Error-correcting codes and the Shannon capacity},
	school    = {University of Amsterdam},
        url={https://arxiv.org/abs/2005.02945},
	year      = {2019},
}

@article{litjens2017semidefinite,
  title={Semidefinite bounds for nonbinary codes based on quadruples},
  author={Litjens, Bart and Polak, Sven and Schrijver, Alexander},
  journal={Designs, Codes and Cryptography},
  volume={84},
  number={1},
  pages={87--100},
  year={2017},
  doi={https://doi.org/10.1007/s10623-016-0216-5},
  publisher={Springer}
}

@article{gijswijt2009block,
  title={Block diagonalization for algebra's associated with block codes},
  author={Gijswijt, Dion},
  journal={arXiv preprint arXiv:0910.4515},
  url={https://arxiv.org/abs/0910.4515},
  year={2009}
}

@article{procesi2007lie,
  title={Lie groups. An approach through invariants and representations},
  author={Procesi, Claudio},
  journal={Bull. Amer. Math. Soc},
  doi={https://doi.org/10.1007/978-0-387-28929-8},
  year={2007}
}

@article{berta2021semidefinite,
  title={Semidefinite programming hierarchies for constrained bilinear optimization},
  author={Berta, Mario and Borderi, Francesco and Fawzi, Omar and Scholz, Volkher B},
  journal={Mathematical Programming},
  pages={1--49},
  year={2021},
  doi = {https://doi.org/10.1007/s10107-021-01650-1},
  publisher={Springer}
}

@article{shannon1948mathematical,
  title={A mathematical theory of communication},
  author={Shannon, Claude Elwood},
  journal={The Bell system technical journal},
  volume={27},
  number={3},
  pages={379--423},
  year={1948},
  doi={10.1002/j.1538-7305.1948.tb01338.x}, 
  publisher={Nokia Bell Labs}
}

@article{barman2017algorithmic,
  title={Algorithmic aspects of optimal channel coding},
  author={Barman, Siddharth and Fawzi, Omar},
  journal={IEEE Transactions on Information Theory},
  volume={64},
  number={2},
  pages={1038--1045},
  year={2017},
  doi = {10.1109/TIT.2017.2696963}, 
  publisher={IEEE}
}

@inproceedings{fawzi2019approximation,
  title={Approximation algorithms for classical-quantum channel coding},
  author={Fawzi, Omar and Seif, Johanna and Szil{\'a}gyi, D{\'a}niel},
  booktitle={2019 IEEE International Symposium on Information Theory (ISIT)},
  pages={2569--2573},
  year={2019},
  doi = {10.1109/ISIT.2019.8849617},
  organization={IEEE}
}

@article{barman2020tight,
  title={Tight approximation bounds for maximum multi-coverage},
  author={Barman, Siddharth and Fawzi, Omar and Ghoshal, Suprovat and G{\"u}rp{\i}nar, Emirhan},
  journal={Mathematical Programming},
  volume={192},
  number={1-2},
  pages={443--476},
  year={2022},
  doi = {https://doi.org/10.1007/s10107-021-01677-4},
  publisher={Springer}
}

@article{hayashi2009information,
  title={Information spectrum approach to second-order coding rate in channel coding},
  author={Hayashi, Masahito},
  journal={IEEE Transactions on Information Theory},
  volume={55},
  number={11},
  pages={4947--4966},
  year={2009},
  doi={10.1109/TIT.2009.2030478},
  publisher={IEEE}
}

@article{polyanskiy2010channel,
  title={Channel coding rate in the finite blocklength regime},
  author={Polyanskiy, Yury and Poor, H Vincent and Verd{\'u}, Sergio},
  journal={IEEE Transactions on Information Theory},
  volume={56},
  number={5},
  pages={2307--2359},
  year={2010},
  doi={10.1109/TIT.2010.2043769},
  publisher={IEEE}
}

@article{leung2015power,
  title={On the power of PPT-preserving and non-signalling codes},
  author={Leung, Debbie and Matthews, William},
  journal={IEEE Transactions on Information Theory},
  volume={61},
  number={8},
  pages={4486--4499},
  year={2015},
  doi={10.1109/TIT.2015.2439953},
  publisher={IEEE}
}

@article{reimpell2005iterative,
  title={Iterative optimization of quantum error correcting codes},
  author={Reimpell, Michael and Werner, Reinhard F},
  journal={Physical review letters},
  volume={94},
  number={8},
  pages={080501},
  year={2005},
  doi={https://doi.org/10.1103/PhysRevLett.94.080501},
  publisher={APS}
}

@article{fletcher2007optimum,
  title={Optimum quantum error recovery using semidefinite programming},
  author={Fletcher, Andrew S and Shor, Peter W and Win, Moe Z},
  journal={Physical Review A},
  volume={75},
  number={1},
  pages={012338},
  year={2007},
  doi={https://doi.org/10.1103/PhysRevA.75.012338},
  publisher={APS}
}

@article{taghavi2010channel,
  title={Channel-optimized quantum error correction},
  author={Taghavi, Soraya and Kosut, Robert L and Lidar, Daniel A},
  journal={IEEE Transactions on Information Theory},
  volume={56},
  number={3},
  pages={1461--1473},
  year={2010},
  doi={10.1109/TIT.2009.2039162},
  publisher={IEEE}
}

@article{johnson2017qvector,
  title={QVECTOR: an algorithm for device-tailored quantum error correction},
  author={Johnson, Peter D and Romero, Jonathan and Olson, Jonathan and Cao, Yudong and Aspuru-Guzik, Al{\'a}n},
  journal={arXiv preprint arXiv:1711.02249},
  url={https://arxiv.org/abs/1711.02249},
  year={2017}
}

@article{berta2016quantum,
  title={Quantum bilinear optimization},
  author={Berta, Mario and Fawzi, Omar and Scholz, Volkher B},
  journal={SIAM Journal on Optimization},
  volume={26},
  number={3},
  pages={1529--1564},
  year={2016},
 doi = {https://doi.org/10.1137/15M1037731},
  publisher={SIAM}
}

@article{tomamichel2016quantum,
  title={Quantum coding with finite resources},
  author={Tomamichel, Marco and Berta, Mario and Renes, Joseph M},
  journal={Nature communications},
  volume={7},
  number={1},
  pages={11419},
  year={2016},
  doi={https://doi.org/10.1038/ncomms11419},
  publisher={Nature Publishing Group UK London}
}

@inproceedings{wang2016semidefinite,
  title={A semidefinite programming upper bound of quantum capacity},
  author={Wang, Xin and Duan, Runyao},
  booktitle={2016 IEEE International Symposium on Information Theory (ISIT)},
  pages={1690--1694},
  year={2016},
  doi={10.1109/ISIT.2016.7541587},
  organization={IEEE}
}

@article{wang2018semidefinite,
  title={Semidefinite programming converse bounds for quantum communication},
  author={Wang, Xin and Fang, Kun and Duan, Runyao},
  journal={IEEE Transactions on Information Theory},
  volume={65},
  number={4},
  pages={2583--2592},
  year={2018},
  doi={10.1109/TIT.2018.2874031},
  publisher={IEEE}
}

@article{kaur2019extendibility,
  title={Extendibility limits the performance of quantum processors},
  author={Kaur, Eneet and Das, Siddhartha and Wilde, Mark M and Winter, Andreas},
  journal={Physical review letters},
  volume={123},
  number={7},
  pages={070502},
  year={2019},
  doi = {https://doi.org/10.1103/PhysRevLett.123.070502},
  publisher={APS}
}

@article{fawzi2022hierarchy,
  title={A hierarchy of efficient bounds on quantum capacities exploiting symmetry},
  author={Fawzi, Omar and Shayeghi, Ala and Ta, Hoang},
  journal={IEEE Transactions on Information Theory},
  volume={68},
  number={11},
  pages={7346--7360},
  year={2022},
  doi={10.1109/TIT.2022.3182101},
  publisher={IEEE}
}

@article{gribling2021mutually,
  title={Mutually unbiased bases: polynomial optimization and symmetry},
  author={Gribling, Sander and Polak, Sven},
  journal={Quantum},
  volume={8},
  pages={1318},
  year={2024},
  publisher={Verein zur F{\"o}rderung des Open Access Publizierens in den Quantenwissenschaften}
}

@article{polak2019semidefinite,
  title={Semidefinite programming bounds for Lee codes},
  author={Polak, Sven C},
  journal={Discrete Mathematics},
  volume={342},
  number={9},
  pages={2579--2589},
  year={2019},
  doi={https://doi.org/10.1016/j.disc.2019.05.019},
  publisher={Elsevier}
}

@article{schrijver2005new,
  title={New code upper bounds from the Terwilliger algebra and semidefinite programming},
  author={Schrijver, Alexander},
  journal={IEEE Transactions on Information Theory},
  volume={51},
  number={8},
  pages={2859--2866},
  year={2005},
  doi={10.1109/TIT.2005.851748},
  publisher={IEEE}
}

@article{gijswijt2006new,
  title={New upper bounds for nonbinary codes based on the Terwilliger algebra and semidefinite programming},
  author={Gijswijt, Dion and Schrijver, Alexander and Tanaka, Hajime},
  journal={Journal of Combinatorial Theory, Series A},
  volume={113},
  number={8},
  pages={1719--1731},
  year={2006},
  doi={https://doi.org/10.1016/j.jcta.2006.03.010},
  publisher={Elsevier}
}

@article{polak2022symmetry,
  title={Symmetry reduction to optimize a graph-based polynomial from queueing theory},
  author={Polak, Sven C},
  journal={SIAM Journal on Applied Algebra and Geometry},
  volume={6},
  number={2},
  pages={243--266},
  year={2022},
  doi={https://doi.org/10.1137/21M14132},
  publisher={SIAM}
}

@article{brosch2021optimizing,
  title={Optimizing hypergraph-based polynomials modeling job-occupancy in queuing with redundancy scheduling},
  author={Brosch, Daniel and Laurent, Monique and Steenkamp, Andries},
  journal={SIAM Journal on Optimization},
  volume={31},
  number={3},
  pages={2227--2254},
  year={2021},
  doi = {https://doi.org/10.1137/20M1369592},
  publisher={SIAM}
}

@article{de2007reduction,
  title={Reduction of symmetric semidefinite programs using the regular-representation},
  author={De Klerk, Etienne and Pasechnik, Dmitrii V and Schrijver, Alexander},
  journal={Mathematical programming},
  volume={109},
  number={2-3},
  pages={613--624},
  year={2007},
  doi={https://doi.org/10.1007/s10107-006-0039-7},
  publisher={Springer}
}

@article{brosch2022new,
  title={New lower bounds on crossing numbers of K m, n from semidefinite programming},
  author={Brosch, Daniel and C. Polak, Sven},
  journal={Mathematical Programming},
  pages={1--23},
  year={2023},
  publisher={Springer}
}

@article{gatermann2004symmetry,
  title={Symmetry groups, semidefinite programs, and sums of squares},
  author={Gatermann, Karin and Parrilo, Pablo A},
  journal={Journal of Pure and Applied Algebra},
  volume={192},
  number={1-3},
  pages={95--128},
  year={2004},
  doi={https://doi.org/10.1016/j.jpaa.2003.12.011},
  publisher={Elsevier}
}

@article{riener2013exploiting,
  title={Exploiting symmetries in SDP-relaxations for polynomial optimization},
  author={Riener, Cordian and Theobald, Thorsten and Andr{\'e}n, Lina Jansson and Lasserre, Jean B},
  journal={Mathematics of Operations Research},
  volume={38},
  number={1},
  pages={122--141},
  year={2013},
  doi={10.1287/moor.1120.0558},
  publisher={INFORMS}
}

@article{vallentin2009symmetry,
  title={Symmetry in semidefinite programs},
  author={Vallentin, Frank},
  journal={Linear Algebra and its Applications},
  volume={430},
  number={1},
  pages={360--369},
  year={2009},
  doi={https://doi.org/10.1016/j.laa.2008.07.025},
  publisher={Elsevier}
}

@article{laurent2007strengthened,
  title={Strengthened semidefinite programming bounds for codes},
  author={Laurent, Monique},
  journal={Mathematical programming},
  volume={109},
  number={2-3},
  pages={239--261},
  year={2007},
  doi={https://doi.org/10.1007/s10107-006-0030-3},
  publisher={Springer}
}

@article{bai2009exploiting,
  title={Exploiting group symmetry in truss topology optimization},
  author={Bai, Yanqin and de Klerk, Etienne and Pasechnik, Dmitrii and Sotirov, Renata},
  journal={Optimization and Engineering},
  volume={10},
  pages={331--349},
  year={2009},
  doi = {https://doi.org/10.1007/s11081-008-9050-6},
  publisher={Springer}
}

@article{de2010exploiting,
  title={Exploiting special structure in semidefinite programming: A survey of theory and applications},
  author={De Klerk, Etienne},
  journal={European Journal of Operational Research},
  volume={201},
  number={1},
  pages={1--10},
  year={2010},
  doi={https://doi.org/10.1016/j.ejor.2009.01.025},
  publisher={Elsevier}
}

@article{Holdsworth_23,
  title = {Quantifying the performance of approximate teleportation and quantum error correction via symmetric 2-PPT-extendible channels},
  author = {Holdsworth, Tharon and Singh, Vishal and Wilde, Mark M.},
  journal = {Phys. Rev. A},
  volume = {107},
  issue = {1},
  pages = {012428},
  numpages = {19},
  year = {2023},
  month = {Jan},
  publisher = {American Physical Society},
  doi = {10.1103/PhysRevA.107.012428},
  url = {https://link.aps.org/doi/10.1103/PhysRevA.107.012428}
}

@article{fawzi2023broadcast,
  title={Broadcast Channel Coding: Algorithmic Aspects and Non-Signaling Assistance},
  author={Fawzi, Omar and Ferm{\'e}, Paul},
  journal={IEEE Transactions on Information Theory},
  year={2024},
  publisher={IEEE}
}

@article{borderi2022finetti,
  title={De Finetti methods in quantum information},
  author={Borderi, Francesco},
  year={2022},
  doi={10.25560/96948},
  publisher={PhD thesis, Imperial College London}
}

@article{Gharibian_10,
author = {Gharibian, Sevag},
title = {Strong NP-Hardness of the Quantum Separability Problem},
year = {2010},
issue_date = {March 2010},
publisher = {Rinton Press, Incorporated},
address = {Paramus, NJ},
volume = {10},
number = {3},
issn = {1533-7146},
journal = {Quantum Info. Comput.},
month = {Mar},
pages = {343–360},
numpages = {18},
url= {https://dl.acm.org/doi/10.5555/2011350.2011361},
keywords = {entanglement-breaking, entanglement detection, entanglement, weak membership, quantum separability problem, NP-hard}
}

@article{doherty2002distinguishing,
  title={Distinguishing separable and entangled states},
  author={Doherty, Andrew C and Parrilo, Pablo A and Spedalieri, Federico M},
  journal={Physical Review Letters},
  volume={88},
  number={18},
  pages={187904},
  year={2002},
  doi={https://doi.org/10.1103/PhysRevLett.88.187904},
  publisher={APS}
}

@article{christandl2007one,
  title={One-and-a-half quantum de Finetti theorems},
  author={Christandl, Matthias and K{\"o}nig, Robert and Mitchison, Graeme and Renner, Renato},
  journal={Communications in mathematical physics},
  volume={273},
  number={2},
  pages={473--498},
  year={2007},
  doi={https://doi.org/10.1007/s00220-007-0189-3},
  publisher={Springer}
}

@article{doherty2004complete,
  title={Complete family of separability criteria},
  author={Doherty, Andrew C and Parrilo, Pablo A and Spedalieri, Federico M},
  journal={Physical Review A},
  volume={69},
  number={2},
  pages={022308},
  year={2004},
  doi={10.1103/PhysRevA.69.022308},
  publisher={APS}
}

@article{fawzi2023multiple,
  title={Multiple-access channel coding with non-signaling correlations},
  author={Fawzi, Omar and Ferm{\'e}, Paul},
  journal={IEEE Transactions on Information Theory},
  year={2023},
  doi={10.1109/TIT.2023.3301719},
  publisher={IEEE}
}

@article{kaur2021resource,
  title={Resource theory of unextendibility and nonasymptotic quantum capacity},
  author={Kaur, Eneet and Das, Siddhartha and Wilde, Mark M and Winter, Andreas},
  journal={Physical Review A},
  volume={104},
  number={2},
  pages={022401},
  year={2021},
  doi={https://doi.org/10.1103/PhysRevA.104.022401},
  publisher={APS}
}

@article{raymond2018symmetric,
  title={Symmetric sums of squares over k-subset hypercubes},
  author={Raymond, Annie and Saunderson, James and Singh, Mohit and Thomas, Rekha R},
  journal={Mathematical Programming},
  volume={167},
  number={2},
  pages={315--354},
  year={2018},
  publisher={Springer}
}

@book{lasserre2015introduction,
  title={An introduction to polynomial and semi-algebraic optimization},
  author={Lasserre, Jean Bernard},
  volume={52},
  year={2015},
  publisher={Cambridge University Press}
}

@article{bergh2024parallelization,
	title={Parallelization of Adaptive Quantum Channel Discrimination in the Non-Asymptotic Regime},
	author={Bergh, Bjarne and Datta, Nilanjana and Salzmann, Robert and Wilde, Mark M},
	journal={IEEE Transactions on Information Theory},
	year={2024},
	publisher={IEEE}
}

@article{cheng2024sample,
	title={Sample complexity of quantum hypothesis testing},
	author={Cheng, Hao-Chung and Datta, Nilanjana and Liu, Nana and Nuradha, Theshani and Salzmann, Robert and Wilde, Mark M},
	journal={arXiv preprint arXiv:2403.17868},
	year={2024}
}

@article{huber2022refuting,
	title={Refuting spectral compatibility of quantum marginals},
	author={Huber, Felix and Wyderka, Nikolai},
	journal={arXiv preprint arXiv:2211.06349},
	year={2022}
}
\newpage

\appendix
 \section{Proof of Lemma~\ref{lem:direct_groups_representative}}
 \label{Apen:representative_set}
\begin{proof}
	Let $\{U_{1}^{(1)},\dots,U_{k_1}^{(1)}\}$ and $\{U_{1}^{(2)},\dots,U_{k_2}^{(2)} \}$ be the representative matrix sets that correspond to the action of $G_1$ and $G_2$ on $\cH_1$ and $\cH_2$, respectively.  That is, let $\cH_1 = \oplus_{i=1}^{k_1} \oplus_{j=1}^{m_i} \cH_{i,j}^{(1)}$ and  $\cH_2 = \oplus_{i=1}^{k_2} \oplus_{j=1}^{s_i} \cH_{i,j}^{(2)}$ be decompositions of $\cH_1$ and $\cH_2$ into irreducible modules, such that $\cH_{i,j}^{(1)} \cong \cH_{i',j'}^{(1)}$ if and only if~$i=i'$ (for $i \in [k_1]$ and~$j \in [m_i]$), and $\cH_{i,j}^{(2)} \cong \cH_{i',j'}^{(2)}$ if and only if~$i=i'$ (for $i \in [k_2]$ and~$j \in [s_i]$). For each $i \in [k_1]$ and~$j \in [m_i]$, let~$u^{(1)}_{i,j} \in \cH_{i,j}^{(1)}$ be a nonzero vector such that for each~$i \in [k_1]$ and~$j,j' \in [m_i]$ there is a bijective $G_1$-equivariant linear map $\phi^{(1)}_{i,j,j'}$ from $\cH_{i,j}^{(1)}$ to $\cH_{i,j'}^{(1)}$ mapping $u^{(1)}_{i,j}$ to $u^{(1)}_{i,j'}$.  Assume that the matrix~$U_{i}^{(1)}:=(u_{i,j}^{(1)} \, | j \in [m_i])$ is the $i$-th matrix in the representative matrix set for the action of~$G_1$ on~$\cH_1$, for~$i \in [k_1]$.
	
	Similarly,  For each $i \in [k_2]$ and~$j \in [s_i]$, let~$u^{(2)}_{i,j} \in \cH_{i,j}^{(2)}$ be a nonzero vector such that for each~$i \in [k_2]$ and~$j,j' \in [s_i]$ there is a bijective $G_2$-equivariant linear map $\phi^{(2)}_{i,j,j''}$ from $\cH_{i,j}^{(2)}$ to $\cH_{i,j'}^{(2)}$ mapping $u^{(2)}_{i,j}$ to $u^{(2)}_{i,j'}$.  Assume that the matrix~$U_{i}^{(2)}:=(u_{i,j}^{(2)} \, | j \in [s_i])$ is the $i$-th matrix in the representative matrix set for the action of~$G_2$ on~$\cH_2$, for~$i \in [k_2]$.
	\par
	According to~\cite[Theorem 10]{rep_1977}, the decomposition of~$\cH_1 \otimes \cH_2$ into irreducible representations for~$G_1 \times G_2$ is $\oplus_{i=1}^{k_1} \oplus_{i'=1}^{k_2} \oplus_{j=1}^{m_i} \oplus_{j'=1}^{s_i} \cH_{i,j}^{(1)} \otimes \cH_{i',j'}^{(2)}$, and $\cH_{i,j}^{(1)} \otimes \cH_{i',j'}^{(2)} \cong \cH_{i'' ,j''}^{(1)} \otimes \cH_{i''',j'''}^{(2)}$ if and only if~$i=i''$ and~$i'=i'''$.  Now, if~$u_{i,j}^{(1)} \otimes u_{i',j'}^{(2)} \in \cH_{i,j}^{(1)} \otimes \cH_{i',j'}^{(2)}$ and~$u_{i,j''}^{(1)} \otimes u_{i',j'''}^{(2)} \in \cH_{i,j''}^{(1)} \otimes \cH_{i',j'''}^{(2)}$, the map $\phi^{(1)}_{i,j,j'} \otimes \phi^{(2)}_{i',j'',j'''}$ is a $G_1 \times G_2$-equivariant map from  $\cH_{i,j}^{(1)} \otimes \cH_{i',j'}^{(2)}$ to  $\cH_{i,j''}^{(1)} \otimes \cH_{i',j'''}^{(2)}$ mapping $u_{i,j}^{(1)} \otimes u_{i',j'}^{(2)} $ to $u_{i,j''}^{(1)} \otimes u_{i',j'''}^{(2)} $. It follows that the representative vectors for the action of~$G_1 \times G_2$ on $\cH_1 \otimes \cH_2$ are the vectors $u_{i,j}^{(1)} \otimes u_{i',j'}^{(2)} $, and that the representative matrix set is $\{ U_{i}^{(1)} \otimes U_{j}^{(2)} : i=1,\dots,k_1, \, j = 1,\dots,k_2 \} $, as desired.
\end{proof}
 
 \section{Proof of Lemma~\ref{lem:computing_entries_block}}
 \label{appendix:computing_entries}
 The proof of Lemma~\ref{lem:computing_entries_block} is based on~\cite{litjens2017semidefinite} and~\cite[pp. 30-31]{Polak_thesis} (see also ~\cite{fawzi2022hierarchy}), but we provide it here for the reader's convenience. We first introduce some notation and basic results. 
 
 
 For a finite dimensional complex vector space $\mathcal{H}$, the \emph{dual vector space} $\mathcal{H}^{*}$ of $\mathcal{H}$ is the vector space of all linear transformations $\varphi: \mathcal{H} \to \mathbb{C}$. The \emph{coordinate ring} of $\mathcal{H}$, denoted as $\mathcal{O}(\mathcal{H})$, is the algebra consisting of all $\mathbb{C}$-linear combinations of products of elements from $\mathcal{H}^{*}$. An element of $\mathcal{O}(\mathcal{H})$ is called a polynomial on $\mathcal{H}$. A polynomial $p \in \mathcal{O}(\mathcal{H})$ is called homogeneous if it is a $\mathbb{C}$-linear combination of a product of $n$ non-constant elements of $\mathcal{H}^{*}$ (for a fixed non-negative integer $n$). We denote by $\mathcal{O}_{n}(\mathcal{H})$ the set of all homogeneous polynomials of degree $n$.
 
 \par 
 
 Set $W_{\cH} \coloneqq \cH \otimes \cH$, for each $p \coloneqq (x,y) \in [d_{\cH}] \times [d_{\cH}]$, define
 \begin{align*}
 	a_{p} \coloneqq x \otimes y \in W_{\cH} \, ,
 \end{align*}
 then the set $\cW \coloneqq \{a_{p}: p \in [d_{\cH}] \times [d_{\cH}] \}$ is a basis of $W_{\cH}$. Let $\cW^{*} \coloneqq \{a_{p}^{*}: p \in [d_\cH] \times [d_\cH] \}$ be the corresponding dual basis for $W_{\cH}^{*}$.
 
 Using the natural identification of $([d_{\cH}]\times [d_{\cH}])^n$ and $\left[(d_{\cH})^n \right]^2$, for any $r \in [m(\cH)]$, we have 
 \begin{align*}
 	C_{r}^{\cH} \coloneqq \sum_{(p_1,\dots,p_n) \in O_{r}^{\cH}} a_{p_1} \otimes \dots \otimes a_{p_n} \in W_{\cH}^{\otimes n} \, .
 \end{align*}
 Note that $C_{r}^{\cH}$ can be obtained from $\mathrm{vec}\left(C_{r}^{\cH} \right)$ by applying the permutation operator which maps $\left(\cH^{\otimes n} \right)^{\otimes 2}$ to $\left( \cH^{\otimes 2} \right)^{\otimes n}$. For every $(p_1,\dots,p_n) \in [(d_{\cH})^2]^n$, let 
 \begin{align}
 	\mu(p_1,\dots,p_n) \coloneqq a_{p_1}^{*} \cdots a_{p_n}^{*} \in \mathcal{O}_{n}(W_\cH)
 \end{align}
 be a degree $n$ monomial expressed in the basis $\cW^{*}$. Note that, for a fixed $r \in [m(\cH)]$, $\mu(p_1,\dots,p_n)$ is the same monomial, for every $(p_1,\dots,p_n) \in O_{r}^{\cH}$. We denote this monomial by $\mu \left( O_{r}^{\cH} \right)$. Moreover, $\{O_{r}^{\cH} \}_{r \in [m(\cH)]}$ partitions $\left[(d_\cH)^2 \right]^n$ into disjoint subsets. Therefore, there exists a bijection between $\{O_{r}^{\cH}\}_{r \in [m(\cH)]}$ and the set of degree $n$ monomials expressed in the basis $\cW^{*}$.
 
 
 Let $\zeta: (W_{\cH}^{*})^{\otimes n} \to \mathcal{O}_{n}(W_{\cH})$ be a linear function defined by
 \begin{align*}
 	\zeta(w_1^{*}\otimes \dots \otimes w_{n}^{*}) \coloneqq w_1^{*} \cdots w_{n}^{*}  \text{ for all } w_1^{*},\dots,w_{n}^{*} \in W_{\cH}^{*} \, .
 \end{align*}
 
 We denote $\overline{w} = \zeta(w)$ for all $w \in (W_{\cH}^{*})^{\otimes n}$. For any $\lambda \vdash_{d_\cH} n$ and $\tau,\gamma \in T_{\lambda,d_{\cH}}$, define the polynomial $G_{\tau,\gamma} \in \CC[x_{i,j}: i,j = 1,\dots,d_{\cH}]$ by 
 \begin{align}
 	\label{eq:polynomial}
 	G_{\tau,\gamma}(X) \coloneqq \sum_{\substack{\tau' \sim \tau \\ \gamma' \sim \gamma }} \sum_{c,c' \in C_{\lambda}} \mathrm{sgn}(cc') \prod_{y \in Y(\lambda)} x_{\tau'c(y),\gamma'c'(y)} \, ,
 \end{align} 
 for $X = (x_{i,j})_{i,j = 1}^{d_{\cH}} \in \CC^{d_{\cH} \times d_{\cH}}$. Ref~\cite[Proposition 3]{litjens2017semidefinite} and \cite[Theorem 7]{gijswijt2009block} show that the polynomial in Eq~\eqref{eq:polynomial} can be computed in polynomial time, i.e., expressed as a linear combination of monomials in variables $x_{i,j}$ in $\poly(n)$ time.
 \begin{lemma} [\cite{gijswijt2009block,litjens2017semidefinite}]
 	\label{lem:polynomial}
 	Let $\lambda \in \Par(d_{\cH},n)$ and every $\tau,\gamma \in T_{\lambda,d_{\cH}}$. Expressing the polynomial  $G_{\tau,\gamma}(X)$ as a linear combination of monomials can be done in $\poly(n)$ time, for fixed $d_{\cH}$. 
 \end{lemma}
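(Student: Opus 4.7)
The plan rests on the observation that $G_{\tau,\gamma}(X)$ is a polynomial of total degree exactly $n$ in the $d_\cH^2$ variables $x_{i,j}$. Hence the number of distinct monomials $\prod_{i,j} x_{i,j}^{n_{i,j}}$ with $\sum_{i,j} n_{i,j} = n$ is at most $\binom{n + d_\cH^2 - 1}{d_\cH^2 - 1} = O(n^{d_\cH^2 - 1})$, which is polynomial in $n$ for fixed $d_\cH$. So the target representation has polynomial size; the work is to compute each coefficient in polynomial time rather than iterating over the (exponentially large) index set of the defining sum.

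The key structural tool is the column-factorization of the column stabilizer $C_\lambda$. Writing $Y(\lambda)$ as a disjoint union of columns $L_1,\dots,L_{\lambda_1}$, each of length at most $d_\cH$, the group $C_\lambda$ decomposes as $\prod_\ell \mS_{L_\ell}$. For any fixed ``row-labeling'' function $z: Y(\lambda) \to [d_\cH]$, the inner sum
\begin{align*}
\sum_{c \in C_\lambda} \mathrm{sgn}(c) \prod_{y \in Y(\lambda)} x_{\tau'c(y),\, z(y)}
= \prod_{\ell=1}^{\lambda_1} \det\bigl( x_{\tau'(y'),\, z(y)} \bigr)_{y,y' \in L_\ell}
\end{align*}
factors as a product of column determinants, each of size at most $d_\cH \times d_\cH$. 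Since $d_\cH$ is fixed, each column determinant has $O(1)$ monomials and can be written out explicitly. Applying the same simplification to the $c'$-sum, I reduce $G_{\tau,\gamma}(X)$ to a sum over $\tau' \sim \tau$ and $\gamma' \sim \gamma$ of a product of two small determinants per column.

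To collapse the row-equivalence sums over $\tau' \sim \tau$ and $\gamma' \sim \gamma$ without explicit enumeration, I would run a dynamic program over the columns of $Y(\lambda)$ indexed by the partial monomial type. Concretely, the DP state at column $\ell$ records, for each pair $(i,j) \in [d_\cH]^2$, the number of times $x_{i,j}$ has appeared so far in the product; each state has entries bounded by $n$, so the number of reachable states is at most $(n+1)^{d_\cH^2}$, polynomial in $n$. The transition at column $\ell$ multiplies in the signed sum over column-determinant monomials compatible with the row-structure of $\tau$ and $\gamma$ restricted to $L_\ell$. Since each column has size at most $d_\cH$, each transition costs $O(1)$ time, so the entire DP runs in $\poly(n)$ time. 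Reading off the coefficient of a given monomial amounts to looking up the final state matching its type vector $(n_{i,j})$.

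The main obstacle I foresee is the sign and multiplicity bookkeeping: one must check that the signed counts assembled by the DP are exactly the coefficients of $G_{\tau,\gamma}(X)$ and not of some related polynomial (e.g.\ differing by a symmetrization factor from the row-equivalence classes). This is handled by the determinantal factorization, which packages the column signs cleanly, together with careful accounting of the row stabilizer of $\tau$ (resp.\ $\gamma$) inside $R_\lambda$. For these routine but delicate details, I would invoke Theorem~7 of~\cite{gijswijt2009block} and the argument in~\cite{litjens2017semidefinite} as a black box.
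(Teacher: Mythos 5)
The paper does not prove this lemma — it is invoked from \cite{gijswijt2009block} and \cite{litjens2017semidefinite} with no argument of its own — so your sketch is being measured against those sources, not an in-paper derivation. The spirit is right (polynomially many monomials, column-wise factorization of $C_\lambda$, dynamic program over columns), but two steps do not hold as written. The ``product of two small determinants per column'' is not a valid factorization: in $\prod_{y}x_{\tau'c(y),\gamma'c'(y)}$ the permutations $c$ and $c'$ are coupled through the shared cell $y$, so the $c$-sum and $c'$-sum cannot be pulled apart into two independent determinants. What actually happens is that after reindexing the product via $y\mapsto c(y)$ and then substituting $c''=c'c^{-1}$, the double sum collapses to $|C_\lambda|$ times a \emph{single} column-determinantal product,
\begin{align*}
\sum_{c,c'\in C_\lambda}\mathrm{sgn}(cc')\prod_{y\in Y(\lambda)}x_{\tau'c(y),\gamma'c'(y)}
=|C_\lambda|\prod_{\ell=1}^{\lambda_1}\det\big(x_{\tau'(y),\gamma'(y')}\big)_{y,y'\in L_\ell}\,,
\end{align*}
which still has $\bigO(1)$ size per column for fixed $d_{\cH}$, so the conclusion you want survives, but not via the two-determinant picture.

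The more serious gap is in the dynamic program. Tracking only the partial monomial type $(n_{i,j})$ as state is insufficient to carry out the sum over $\tau'\sim\tau$ and $\gamma'\sim\gamma$: two different fillings of the first $\ell$ columns can reach the same partial monomial type while leaving different multisets of values unassigned in each row, and the legal transitions at column $\ell+1$ then differ. The state must additionally record, for each row $j$ and each value $v\in[d_{\cH}]$, how many copies of $v$ remain unassigned in row $j$ of $\tau$ and of $\gamma$; you gesture at this with ``compatible with the row-structure'' but never put it into the state, so the transition is not well-defined as described. Including this information enlarges the state space by at most another $\poly(n)$ factor for fixed $d_{\cH}$, so the complexity bound is salvageable, and with these two repairs your outline matches the strategy behind \cite[Theorem~7]{gijswijt2009block} and \cite[Proposition~3]{litjens2017semidefinite}, which the paper takes as a black box.
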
   
 
 We now prove Lemma~\ref{lem:computing_entries_block}.
 
 

 \begin{proof}[Proof Lemma~\ref{lem:computing_entries_block}]
 	
 	For $r \in [m(\cH)]$, utilizing the fact that $C_{r}^{\cH} \in \cH^{\otimes n} \otimes \cH^{\otimes n}$ and $u_{\tau}, u_{\gamma} \in (\cH^{\otimes n})^{*}$ (via the standard inner product), we can write $u_{\tau}^{T}C_{r}^{\cH}u_{\gamma} = (u_{\tau} \otimes u_{\gamma})(C_{r}^{\cH})$. Set
 	\begin{align*}
 		g \coloneqq u_{\tau} \otimes u_{\gamma}  =  \sum_{\substack{\tau' \sim \tau \\ \gamma' \sim \gamma}}\sum_{c,c' \in C_{\lambda}} \mathrm{sgn}(cc') \bigotimes_{y \in Y(\lambda)} (F)_{\tau'c(y),\gamma'c'(y)} \, , 
 	\end{align*} 
 	where $F \in (W^{*})^{d_{\cH} \times d_{\cH}}$ with $(F)_{x,y} = a_{(x,y)}^{*}$. Then
 	
 	\begin{align*}
 		\sum_{r=1}^{m(\cH)}(u_{\tau} \otimes u_{\gamma})(C_{r}^{\cH}) \mu(O_{r}^{\cH}) &= \sum_{r=1}^{m(\cH)}g(C_{r}^{\cH})\mu(O_{r}^{\cH}) \\
 		&= \sum_{(p_1,\dots,p_n) \in ([d_{\cH}] \times [d_{\cH}])^n} g(a_{p_1} \otimes \dots \otimes a_{p_n})a_{p_1}^{*}\cdots a_{p_{n}}^{*} \\
 		&= \sum_{(p_1,\dots,p_n) \in ([d_{\cH}] \times [d_{\cH}])^n} g(a_{p_1} \otimes \dots \otimes a_{p_n})\overline{a_{p_1}^{*}\otimes \dots \otimes a_{p_n}^{*}} \\
 		&= \overline{g} = \sum_{\substack{\tau' \sim \tau \\ \gamma' \sim \gamma}} \sum_{c,c' \in C_{\lambda}} \mathrm{sgn}(cc') \prod_{y \in Y(\lambda)} (F)_{\tau'c(y),\gamma' c'(y)}  = G_{\tau,\gamma}(F) \, .
 	\end{align*}
 	
 	By the Lemma~\ref{lem:polynomial}, one computes the entry $\sum_{r=1}^{m(\cH)}z_{r}u_{\tau}^{T}C_{r}^{\cH}u_{\gamma}$ by replacing each monomial $\mu(O_{r}^{\cH})$ in $G_{\tau,\gamma}(F)$ with the variable $z_r$. Thus, given $z_1,\dots,z_{m(\cH)}$ we can compute the value $\sum_{r=1}^{m(\cH)}z_{r}u_{\tau}^{T}C_{r}^{\cH}u_{\gamma}$ in $\poly(n)$ time. Finally, from Eq. \eqref{eq:compute_bolck} we can determine the matrix 
 	$\left \llbracket \psi_{\cA,\cH,\cB} \left(\ketbra{i_\cA}{j_\cA} \otimes C_t^{\cH} \otimes \ketbra{i_\cB}{j_\cB} \right) \right \rrbracket_{\lambda}$ in $\poly(d_{\cA},d_{\cB},n)$ time for all $\lambda \in \Par(d_\cH,n)$ and $i_\cA,j_\cA \in [d_\cA]; i_\cB,j_\cB \in [d_\cB]; t \in [m(\cH)]$, which completes the proof.
 \end{proof}

\end{document}